\documentclass{article}

\usepackage{arxiv}

\usepackage[utf8]{inputenc}
\usepackage[T1]{fontenc}
\usepackage{hyperref}
\hypersetup{colorlinks=true, linkcolor=blue!45!black,
  citecolor=blue!45!black, urlcolor=blue!55!black}
\usepackage{url}
\usepackage{amsmath,amssymb,amsthm}
\usepackage{nicefrac}
\usepackage{microtype}
\usepackage{graphicx}
\usepackage{xcolor}
\usepackage{booktabs}
\usepackage{float}
\usepackage{caption}
\usepackage[numbers,sort&compress]{natbib}

\newtheorem{theorem}{Theorem}
\newtheorem{lemma}{Lemma}
\newtheorem{corollary}{Corollary}
\newtheorem{proposition}{Proposition}
\theoremstyle{definition}
\newtheorem{definition}{Definition}

\newcommand{\R}{\mathbb{R}}
\newcommand{\Z}{\mathbb{Z}}
\newcommand{\Stwo}{\mathbb{S}^2}
\newcommand{\kappag}{\kappa_g}
\newcommand{\norm}[1]{\left\|#1\right\|}
\newcommand{\abs}[1]{\left|#1\right|}
\newcommand{\Traw}{T_{\mathrm{raw}}}
\newcommand{\Tf}{T_{\mathrm{CV}}}
\newcommand{\Sdd}{S_{\mathrm{DD6}}}
\newcommand{\Sdgl}{S_{\mathrm{DGL}}}
\newcommand{\Smid}{S_{\mathrm{mid}}}
\newcommand{\kERS}{\kappa}
\newcommand{\kN}{\kappa_{N}}
\newcommand{\kT}{\kappa_{T}}
\newcommand{\doi}[1]{\href{https://doi.org/#1}{doi.org/#1}}

\title{Smooth Curves from Curvature-Driven Subdivision}

\author{
  Hassan Ugail \\
  Centre for Visual Computing and Intelligent Systems \\
  University of Bradford, United Kingdom \\
}

\begin{document}
\maketitle

\begin{abstract}
We introduce and analyse an interpolatory subdivision scheme in which
each inserted point is defined by a curvature-profile model rather than
by an affine combination of its neighbours, and we establish its
convergence and geometric regularity. On each edge, curvatures
estimated from circumscribed circles prescribe a curvature profile, the
segment is reconstructed by Frenet integration with geodesic shooting,
and the new point is taken at half arc length. The construction works
identically in the plane and on the sphere, reproduces geodesics and
circles exactly, and a unique curvature prefilter turns its planar
linearisation into the six-point Deslauriers--Dubuc scheme. The
straightening condition of Ewald, Reif, and Sabin fails for this
scheme. Measuring only the normal part of the relative distortion
repairs their theory, and we prove that the refined polygons converge
to a regular interpolatory limit curve with continuous tangent
direction and curvature, twice continuously differentiable in arc
length. Regularity is thus governed by the normal companion scheme
alone, as these authors conjectured.
\end{abstract}

\keywords{Interpolatory subdivision \and Geometric subdivision \and
Curvature \and Fairness \and Convergence \and Regularity \and Relative
distortion}

\section{Introduction}
\label{sec:intro}

Subdivision is one of the most widely used constructive paradigms in
geometric modelling. Starting from a control polygon, a subdivision scheme
repeatedly refines the data by inserting new points according to a fixed
local rule, and under suitable conditions the polygons converge to a smooth
limit curve. When the scheme is interpolatory, the original points are
retained at every level, so the limit curve passes through all of the input
data. The classical examples are linear, in that the inserted point is an affine
combination of nearby control points, as in the four-point scheme of Dyn,
Levin, and Gregory~\cite{dyn1987four} and, more generally, the
Deslauriers--Dubuc family~\cite{deslauriers1989symmetric}. For such linear
schemes there is by now a mature and essentially complete theory, surveyed
in~\cite{dyn2002acta}, which characterises convergence and smoothness
through the machinery of derived schemes and joint spectral
quantities~\cite{daubechies1992two,rioul1992}.
In particular, the six-point Deslauriers--Dubuc scheme, which plays a
central role in this paper and which we abbreviate DD6, has critical
H\"older exponent approximately $2.830$, while the four-point scheme
stops one full class lower.

Linear rules, however, are blind to geometry. An affine combination knows
nothing about circles, arc length, or the surface on which the data may
live, and this observation has motivated a substantial body of work on
geometric subdivision, in which the insertion is defined by a geometric
construction rather than by fixed coefficients. Well-known examples include
the circle-preserving variant of the four-point scheme by Sabin and
Dodgson~\cite{sabin2005circle}, the tangent-continuity analysis of
geometric planar rules by Dyn and Hormann~\cite{dyn2012geometric}, and the
clothoid-based Hermite subdivision of Reif and
Weinmann~\cite{reif2021clothoid}, in which segments of linearly varying
curvature are fitted to point--tangent data. More recently, intrinsic
geodesic-curvature constructions have carried these ideas to spherical and
hyperbolic
geometry~\cite{ahanchaou2022hyperbolic,bellaihou2024spherical,ahanchaou2026class}.
A related and equally active line of research asks how an arbitrary linear
scheme can be transferred to a Riemannian manifold, typically through the
log-exponential lift, and what smoothness survives the
transfer~\cite{wallner2005convergence,grohs2010general,wallner2011riemannian,wallner2014interpolatory}.
For point-only geometric rules in the plane, the sharpest available
regularity framework is that of Ewald, Reif, and
Sabin~\cite{ewald2015holder}, who axiomatise similarity-equivariant
local schemes, measure the departure of a polygon from uniform straight
data by a relative distortion functional, and derive convergence and
H\"older regularity from the decay of that functional under refinement,
a property they call straightening. That framework governs the second
half of this paper.

The scheme we study in this paper sits within the geometric family, but it
defines the insertion in a different way. Rather than fitting a particular
curve type, we let a curvature-profile model define the inserted point.
The idea is simple to state. For each edge of the polygon, we first
estimate the geodesic curvature at the two endpoints using circumscribed
circles. We then prescribe how the curvature should vary along the yet
unknown segment joining the endpoints. This prescription is the curvature
profile. Given the profile, the segment itself is reconstructed by
integrating the intrinsic Frenet equations, with the initial direction and
the total arc length determined by a two-parameter shooting problem which
forces the segment to arrive at the far endpoint. Finally, the new point is
taken at half arc length along the reconstructed segment. Every ingredient
of this construction is intrinsic, so the same rule operates verbatim in
the Euclidean plane and on the unit sphere. We refer to the scheme as
point-only because, unlike Hermite-type constructions, it consumes and
produces nothing but points.

Why should one expect such a rule to behave well? The motivation comes from
fairness. The curvature-variation energy $\int (\kappag')^2\, ds$ is the
classical fairness measure of Moreton and S\'equin~\cite{moreton1992}, and
a curve is visually pleasing precisely when its curvature varies slowly.
Fairness functionals of this kind have a long history in geometric design. In particular, they underpin the partial-differential-equation approach, in
which shapes are generated as solutions of elliptic boundary-value
problems~\cite{UgailBW99a,UgailBW99b,KubiesaUW04,GonzalezUgail08}. That
programme extends from the variational treatment of harmonic and
biharmonic B\'ezier
surfaces~\cite{monterde2004harmonic,monterde2006general} through parametric
aircraft geometry~\cite{AthanasopoulosUgail09} to mesh approximation and
facial parameterisation~\cite{ShengUgail10,ShengUgail11}, and it supplies
the design setting in which curvature-based fairness measures are used in
practice. The profile we ultimately adopt is drawn from exactly this variational source, for it linearises the Euler--Lagrange equation of the
fairness energy about the constant-curvature state, a derivation we take
from the variational analysis of the six-point scheme
in~\cite{ugail2026fair}, which itself supersedes an earlier preprint of the
same programme~\cite{ugail2026biharmonic} and stands in the lineage of
variational subdivision initiated by
Kobbelt~\cite{kobbelt1996variational}.

What we find when we analyse the resulting rule is, we believe, both
surprising and instructive, and it gives the paper its narrative. The raw
version of the scheme, in which the circumscribed-circle estimates are used
directly, has excellent geometry. It reproduces geodesics exactly, it
reproduces circles, and on the sphere small circles, exactly, and at coarse
refinement levels it is measurably fairer than the six-point comparator.
Under continued refinement, however, its fairness deteriorates, with the
measured energy growing geometrically. It would be easy to dismiss this as
a numerical artefact, but it is nothing of the sort. By expanding the
scheme exactly on perturbed straight-line data, we prove that the raw rule
linearises precisely to the four-point scheme, whose critical H\"older
exponent is one full class below that of the six-point scheme. In other
words, the deterioration is structural, and no amount of tuning the
curvature profile can remove it. The same computation, however, also
reveals the cure. If the curvature estimates are passed through the
three-point prefilter with weights $(-3, 22, -3)/16$ before being used,
then the linearisation becomes exactly the six-point Deslauriers--Dubuc
scheme, and we prove that this filter is the unique symmetric
constant-preserving three-point filter with that property. Because its
coefficients sum to one, it maps constant data to themselves, and the exact
circle reproduction of the raw rule survives filtering unchanged. The
final operator of the paper combines this filter with the variationally
adapted profile, and on the reported benchmarks it is fairer than the
six-point comparator at every tested level, in the plane and on the sphere
alike.

The linearisation theorems settle the structure of the rule, and the
second half of the paper settles its analysis, by a route that is
instructive in itself. One might expect the framework
of~\cite{ewald2015holder} to apply to the filtered scheme after routine
verification, since it is exactly the kind of operator that framework
was built for. We prove that it does not, and that the repair required
is exactly the enhancement conjectured at the end
of~\cite{ewald2015holder}. In more detail, we first prove that the
filtered rule restricted to collinear data of arbitrary spacing is
exactly midpoint subdivision, and that its full planar linearisation at
the uniform straight chain splits into midpoint subdivision acting on
tangential perturbations and DD6 acting on normal perturbations, with
vanishing cross terms. This establishes that the scheme satisfies every
structural axiom of the framework, while simultaneously placing it
outside the reach of the asymptotic theorems there, whose hypotheses
require either that the tangential and normal companions coincide or
that both be smooth schemes. We then prove that the straightening
condition, which underlies every theorem of the framework including
plain convergence, fails for the scheme on entire classes of data. The
decisive example is a nonuniformly sampled circle. The scheme reproduces
the circle exactly and inserts at half arc length, so the nonuniform arc
pattern is preserved forever, the relative distortion stays bounded away
from zero at every level, and yet the refined polygons converge to the
exact circle. The failure is confined to, and fully explained by, the
tangential block of the linearisation, whose second-difference decay
factor equals the edge-shrink rate exactly, so that tangential
nonuniformity is neither amplified nor removed.

The constructive part of the analysis replaces relative distortion by
its normal part. We prove a two-step contraction of the normal
distortion at the linearised level with the exact rational constant
$2697/4096$, define an admissible class of polygons that the operator
maps into itself with edges halving to third order and the safeguard of
the construction never activating, and prove that on this class the
tangential discrepancy between the scheme and DD6 is a pure change of
parameter of second order, while the normal discrepancy is of fourth
order. On arc-length uniform data both components are of fourth order,
which identifies the tangential term exactly. The main theorem then
states that on sufficiently dense admissible data the refined polygons
converge to a regular interpolatory limit curve whose unit tangent and
curvature are continuous functions of arc length, so that the limit is
$G^2$ and its arc-length parametrisation is $C^2$. The circle example
shows that the dyadic parametrisation of the same limit need not be
$C^1$, so the geometric formulation is forced rather than chosen. For
this scheme, the theorem realises the behaviour conjectured in the
conclusion of~\cite{ewald2015holder}, namely that regularity is governed
by the normal companion alone while the tangential companion is
irrelevant, and we discuss in Section~\ref{sec:discussion} what would be
needed to prove the conjecture in its general form.

Before proceeding, we briefly mention two adjacent research lines which
supply further context. Beyond Riemannian surfaces, interpolatory
subdivision has recently been extended to the sub-Riemannian setting of the
Heisenberg group, where the central coordinate exhibits a structural
smoothness loss~\cite{ugail2026heisenberg}. A complementary direction
replaces fixed insertion rules with learned components while retaining
structural guarantees. A shared neural tension operator predicts per-edge
insertion angles across the constant-curvature geometries, with safety
bounds enforced
architecturally~\cite{ugail2026neural,ugail2026adaptive}, and a
proximity-preserving neural refinement rule constrains a learned correction
to a quadratic envelope of a classical stencil~\cite{ugail2026pns}. The
scheme of this paper derives its per-edge adaptivity from a variational
model rather than from training data, and the envelope perspective of the
latter work is close in spirit to the admissibility safeguard we introduce
in Section~\ref{sec:scheme}.

The remainder of the paper is organised as follows. In
Section~\ref{sec:scheme} we define the scheme, beginning with an
overview of the final operator and then developing each of its
ingredients in turn. Section~\ref{sec:solvable} establishes that the
underlying shooting problem is locally well posed and that the scheme
reproduces geodesics and constant-curvature curves.
Section~\ref{sec:linearisation} contains the exact planar linearisation
of the raw and the filtered rules, the uniqueness of the prefilter, the
exact restriction to lines, and the full vector linearisation.
Section~\ref{sec:distortion} presents the framework of relative
distortion, proves membership, and proves the failure of straightening.
Section~\ref{sec:normal} develops the normal distortion, the invariant
admissible class, and the contraction estimates.
Section~\ref{sec:proximity} proves the proximity decomposition modulo
reparametrisation, and Section~\ref{sec:main} assembles the main
convergence and regularity theorem. Section~\ref{sec:numerics} reports
the numerical study, in the plane and on the sphere, and
Section~\ref{sec:discussion} concludes with a discussion of the open
problems. The analytical results we prove are planar, and the spherical
results provide strong experimental support for the same mechanisms.
Every symbolic claim in the paper is verified by exact computation in
the accompanying notebooks, which also generate every figure and
numerical value we report.

\section{The Curvature-Driven Subdivision Scheme}
\label{sec:scheme}

Throughout the paper, $M$ denotes either the Euclidean plane or the unit
sphere $\Stwo \subset \R^3$, so that the constant curvature is $K = 0$ or
$K = 1$, respectively. Control polygons are closed sequences
$p = (p_j)_{j \in \Z_n}$ of points of $M$, and $h$ denotes the maximum
geodesic edge length. Signed geodesic curvature follows one orientation convention in both geometries, whereby polygons are
traversed counterclockwise, and
the sign is taken against the surface orientation.

It is helpful to state the final operator in words before developing its
parts. The scheme, which we denote $\Tf$, maps the polygon $p$ to the
refined polygon with $\Tf(p)_{2j} = p_j$, so that the scheme is
interpolatory, and with $\Tf(p)_{2j+1}$ constructed on the edge
$(p_j, p_{j+1})$ in four steps. First, raw curvature estimates are computed
at the stencil vertices from circumscribed circles. Second, these estimates
are passed through the curvature prefilter derived in
Section~\ref{sec:linearisation}, giving the edge data
$(\hat\kappa_j, \hat\kappa_{j+1})$. Third, the adapted curvature profile
determined by these data is prescribed along the unknown segment, and the
segment is reconstructed by integrating the intrinsic Frenet equations,
shooting on the initial angle and the arc length so that the segment
connects $p_j$ to $p_{j+1}$. Fourth, the inserted point is taken at half
arc length along this segment. In the rare event that the shooting
iteration fails to converge within the admissible neighbourhood of
Theorem~\ref{thm:solvable}, the rule reverts to the raw curvature data on that edge. We return to this safeguard, and quantify how rarely it is
needed, in Section~\ref{sec:numerics}, and Theorem~\ref{thm:invariance}
will show that on the admissible data class of Section~\ref{sec:normal}
it never activates at all. The raw rule, which we denote
$\Traw$, is the same construction without the prefilter, and it serves
throughout the paper as the baseline whose exact analysis motivates the
final design. We now describe each ingredient in detail.

\subsection{Estimating Curvature from Circumscribed Circles}
\label{sec:estimator}

The first ingredient is a local estimate of the geodesic curvature at each
vertex, and the natural choice is the curvature of the circumscribed
circle through the vertex and its two neighbours. In the plane, this is the
well-known signed Menger curvature,
\begin{equation}
  \kappa_j^{M}
  =
  \frac{2\det\!\bigl[p_j - p_{j-1},\ p_{j+1} - p_j\bigr]}
       {\norm{p_j - p_{j-1}}\,\norm{p_{j+1} - p_j}\,\norm{p_{j+1} - p_{j-1}}}.
  \label{eq:menger}
\end{equation}
On the sphere, the analogous construction is equally simple. The circle
through three points of $\Stwo$ is the intersection of the sphere with the
plane spanned by those points. Hence, if $n$ is the unit normal of that
plane, oriented by the traversal, and $c = n \cdot p_j$ is its offset, the
estimate is
\begin{equation}
  \kappa_j^{M} = \frac{c}{\sqrt{1 - c^2}} = \cot \rho ,
  \label{eq:sphcirc}
\end{equation}
where $\rho$ is the intrinsic radius of the circle and the sign is carried
by the orientation of $n$. A point worth emphasising, since the
reproduction results of Section~\ref{sec:solvable} depend on it, is that
this estimator is exact on curves of constant geodesic curvature. If the
three stencil points lie on a geodesic or on a circle, the estimate equals
the true curvature with no error at all.

\subsection{Curvature Profiles}
\label{sec:profiles}

Having estimated the curvatures $\kappa_0$ and $\kappa_1$ at the two ends
of an edge, we must next decide how the curvature should vary along the
segment which will join them. This decision is the curvature profile, and
it is where the variational model enters the construction.

Recall that the fairness energy $\int (\kappag')^2\, ds$ measures the
variation of curvature along a curve. On a surface of constant curvature
$K$, the Euler--Lagrange equation of this energy is the nonlinear
fourth-order equation
$2\kappag^{(4)} + 2(\kappag^2 + K)\kappag'' - \kappag(\kappag')^2 = 0$, a
derivation of which may be found in~\cite{ugail2026fair}. Every constant
is an interior stationary solution of this equation, which reflects the
fact that circles and geodesics are the fairest curves of all. This
suggests linearising the equation about the constant state. Writing
$\bar\kappa = \tfrac12(\kappa_0 + \kappa_1)$ for the mean of the edge data
and $\kappag = \bar\kappa + \eta$, the curvature-sensitive factor of the
linearised operator gives what we call the adapted profile, in which the function $\eta$ solves
\begin{equation}
  \eta'' = -\bigl(\bar\kappa^2 + K\bigr)\eta, \qquad
  \eta(0) = \kappa_0 - \bar\kappa, \qquad \eta(L) = \kappa_1 - \bar\kappa
  \label{eq:adapted}
\end{equation}
on the unknown length interval $[0, L]$. Two consequences follow directly
from this choice, and both will matter later. First, when
$\kappa_0 = \kappa_1$ the boundary data of $\eta$ vanish, so the profile is exactly constant, and this is what drives the exact circle reproduction of
Section~\ref{sec:solvable}. Second, near a geodesic the mean $\bar\kappa$
is small, and the profile reduces to the kernel of
$\partial_{ss} + K$.

For the analysis and the experiments, it is useful to keep two alternative
profiles alongside the adapted one, as ablations which explain why the
final choice is the right one. The clothoid profile interpolates the data
linearly in arc length, so that in the plane the reconstructed segments are
Euler spirals, exactly as in the clothoid-based Hermite subdivision
of~\cite{reif2021clothoid}, and it agrees with the adapted profile whenever the
two data coincide or the mean curvature is small. The sensitive profile
solves $\kappag'' = -K\kappag$ with the same boundary data. As we shall see in
Section~\ref{sec:solvable}, it cannot reproduce nongeodesic circles when
$K \neq 0$, and this obstruction is precisely why it is not the final
choice.

\subsection{Segment Reconstruction by Geodesic Shooting}
\label{sec:shooting}

With the profile fixed, we can now reconstruct the segment itself. The
candidate segment $\gamma_{\phi, L}$ issues from $p_j$ with its initial
tangent making an angle $\phi$ with the geodesic direction towards
$p_{j+1}$, and it evolves according to the intrinsic Frenet equations. On
the unit sphere, writing $x = \gamma$ and $T = \gamma'$, these read
\begin{equation}
  x' = T, \qquad T' = \kappag(s)\,(x \times T) - x ,
  \label{eq:frenet}
\end{equation}
where the final term accounts for the normal curvature of the embedding. In the plane, the system reduces to the familiar $\theta' = \kappag$ with
$\gamma' = (\cos\theta, \sin\theta)$. Of course, an arbitrary choice of the
angle $\phi$ and the length $L$ will not deliver the segment to the far
endpoint. Writing $F(\phi, L)$ for the residual of
$\gamma_{\phi, L}(L)$ against $p_{j+1}$, expressed in a fixed tangent basis
at $p_{j+1}$, the two unknowns must therefore solve the shooting problem
\begin{equation}
  F(\phi, L) = 0 .
  \label{eq:shoot}
\end{equation}
This is a two-by-two nonlinear system, and we solve it by Newton's method
started from the geodesic initial guess, namely $\phi = 0$ and $L$ equal to
the geodesic distance $d(p_j, p_{j+1})$. Note that the profile
\eqref{eq:adapted} itself depends on $L$, so the system is genuinely
nonlinear in both unknowns. Once the solution $(\phi^\ast, L^\ast)$ is
found, the inserted point is simply
$\gamma_{\phi^\ast, L^\ast}(\tfrac12 L^\ast)$, the half-arc-length point of
the segment.

\subsection{The Final Operator and its Support}
\label{sec:finalop}

Assembling the pieces, the final operator $\Tf$ applies the prefilter of
Section~\ref{sec:linearisation} to the raw circumscribed-circle estimates,
prescribes the adapted profile with the filtered data, reconstructs the
segment by shooting, and inserts the half-arc-length point, reverting to
raw data on an edge only when the safeguard described above is triggered.

It is worth recording the support of the rule precisely, since it makes
the comparison with the linear schemes exact. For the edge
$(p_j, p_{j+1})$, the raw endpoint estimates involve the two circumscribed
circles through $p_{j-1}, p_j, p_{j+1}$ and through $p_j, p_{j+1}, p_{j+2}$. Hence the raw rule has four-point
support,
matching the four-point scheme. The filtered pair involves $p_{j-2}$
through $p_{j+3}$, so the filtered rule has six-point support, matching the
six-point scheme. As we shall prove in Section~\ref{sec:linearisation},
these coincidences of support are not accidental, for the raw rule
linearises
to the four-point scheme and the filtered rule to the six-point scheme.
Finally, we note that the rule is interpolatory and intrinsic, and that it
is equivariant under the isometries of $M$, because every ingredient of
the construction, from circumscribed circles through Frenet integration to
arc length, is itself isometry invariant.

\section{Local Solvability and Reproduction}
\label{sec:solvable}

Before any analysis of the scheme's behaviour can begin, we must first
confirm that the scheme is well defined at all. The shooting problem
\eqref{eq:shoot} prescribes the curvature profile and both endpoint
positions, which for fixed $\phi$ and $L$ over-determines the segment. Solvability therefore rests entirely on the two free parameters. The
natural tool for such a question is the implicit function theorem applied
at the geodesic solution, and it leads to the following local statement.

\begin{theorem}[Local solvability of the shooting problem]
\label{thm:solvable}
Let $p_j, p_{j+1} \in M$ with geodesic distance $d > 0$, and on the sphere
let the pair be nonconjugate, so that $d < \pi$. Then there is a
neighbourhood of zero in the curvature data such that for all
$(\kappa_0, \kappa_1)$ in that neighbourhood the shooting problem
\eqref{eq:shoot} has a unique solution $(\phi^\ast, L^\ast)$ in a
neighbourhood of $(0, d)$, depending smoothly on the data.
\end{theorem}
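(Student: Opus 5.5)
We apply the implicit function theorem to the map $G(\phi, L; \kappa_0, \kappa_1) = F(\phi, L)$, regarded as a function of the unknowns $(\phi, L)$ and the data $(\kappa_0, \kappa_1)$, at the base point $(0, d; 0, 0)$. The argument has three steps.

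\textbf{Step 1: $G$ is a smooth map on a neighbourhood of the base point.} For $(\kappa_0,\kappa_1)$ near zero and $L$ near $d$, the adapted profile \eqref{eq:adapted} is an explicit solution of a linear ODE. In rescaled variable $t = s/L$, write $\omega^2 = \bar\kappa^2 + K$. Then
\[
\eta(s) = \frac{(\kappa_0-\bar\kappa)\sin\bigl(\omega(L-s)\bigr) + (\kappa_1-\bar\kappa)\sin(\omega s)}{\sin(\omega L)} .
\]
Here $\sin(\omega x)/\omega$ is interpreted as an entire function of $\omega^2$, which covers the plane case $\omega^2 = \bar\kappa^2 \ge 0$ including $\omega = 0$. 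This expression is smooth in $(s, L, \kappa_0, \kappa_1)$ provided $\sin(\omega L)/\omega \neq 0$.

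At the base point we have $\omega^2 = K$. On the sphere this gives $\sin(d)/1$, which is nonzero because $0 < d < \pi$; this is exactly where nonconjugacy enters. In the plane it gives $L = d > 0$.

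The Frenet system \eqref{eq:frenet}, or $\theta' = \kappa_g$ in the plane, has coefficients that are smooth in $(s, \text{parameters})$. Smooth dependence of ODE solutions on parameters and initial data then shows that $\gamma_{\phi,L}(L)$, and hence $F$ in a fixed tangent chart at $p_{j+1}$ (for instance via $\exp_{p_{j+1}}^{-1}$, which is valid since $d < \pi$), is smooth. Reparametrising to $t \in [0,1]$ keeps the integration interval fixed while $L$ varies.

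\textbf{Step 2: the base point is a solution.} At $\kappa_0 = \kappa_1 = 0$ the profile vanishes identically. The segment with $\phi = 0$ is therefore the unit-speed geodesic from $p_j$ towards $p_{j+1}$, and it reaches $p_{j+1}$ at $s = d$, so $F(0, d; 0, 0) = 0$.

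\textbf{Step 3: the Jacobian $\partial F/\partial(\phi, L)$ is invertible at the base point.} With zero curvature, $\gamma_{\phi, L}(L) = \exp_{p_j}(L\,u(\phi))$, where $u(\phi)$ is the unit vector at angle $\phi$ from the geodesic direction.

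The derivative in $L$ is the unit tangent of the geodesic at $p_{j+1}$.

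The derivative in $\phi$ is $J(d)$, where $J$ is the Jacobi field along the geodesic with $J(0)=0$ and $J'(0)$ equal to the unit normal. In the plane $J(d) = d\,N$. On the sphere $J(d) = \sin d\,N$, which is nonzero by nonconjugacy.

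These two vectors are orthogonal and nonzero, so the Jacobian is, up to the chart, $\mathrm{diag}(1, \sin d)$ or $\mathrm{diag}(1, d)$, and hence nonsingular.

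\textbf{Conclusion.} The implicit function theorem yields neighbourhoods $U$ of the zero data and $V$ of $(0,d)$ such that for each datum in $U$ there is a unique $(\phi^\ast, L^\ast) \in V$ solving \eqref{eq:shoot}, with $(\phi^\ast, L^\ast)$ depending smoothly on the data.

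\textbf{Main obstacle.} The delicate point is Step 1: making sure the $L$-dependence of the profile does not break smoothness or the Jacobian computation. I would handle this by noting that every data-dependent term in $F$ vanishes when $\kappa_0 = \kappa_1 = 0$, uniformly in $L$ near $d$. Consequently $\partial_L F$ at the base point involves only the geodesic flow and no profile derivative. The one genuine analytic requirement is the nondegeneracy $\sin(\omega L) \neq 0$ in the adapted profile, which holds at the base point as shown in Step 1.
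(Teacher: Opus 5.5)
Your proposal is correct and follows the paper's own proof: you apply the implicit function theorem at the geodesic solution $(0,d)$ with zero data, with $\partial_L F$ the terminal unit tangent and $\partial_\phi F$ the endpoint of the transverse Jacobi field of magnitude $d$ or $\sin d$, and you get smoothness of the residual from the profile staying away from its resonance near $(0,d)$. Two of your steps simply spell out what the paper compresses into one line, namely the explicit formula for $\eta$ (showing that at zero data on the sphere the resonance $\sin(\omega L)=0$ sits exactly at the conjugate distance $\pi$) and the remark that the $L$-dependence of the profile drops out of the Jacobian because the profile vanishes identically at zero data; one small correction is that the parenthetical claiming $\exp_{p_{j+1}}^{-1}$ needs $d<\pi$ is unnecessary, since the endpoint stays near $p_{j+1}$.
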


\begin{proof}
At zero curvature data every profile vanishes identically, and
$(\phi, L) = (0, d)$ solves \eqref{eq:shoot} with the geodesic segment. To
apply the implicit function theorem we must show that the Jacobian
$D_{(\phi, L)} F$ is invertible at this solution, and this is where the
geometry enters. The partial derivative of $F$ with respect to $L$ is the
terminal unit tangent of the geodesic. The partial derivative with respect
to $\phi$ is the endpoint value of the transverse Jacobi field generated by
rotating the initial tangent. This field is orthogonal to the terminal
tangent, and its magnitude is $d$ in the plane and $\sin d$ on the unit
sphere. The determinant of $D_{(\phi, L)} F$ is therefore proportional to
this transverse Jacobi factor, which is nonzero for $d > 0$ in the plane
and for nonconjugate pairs on the sphere. Since the residual depends
smoothly on the unknowns and the data, because the profile
\eqref{eq:adapted} and its ablation variants are smooth in
$(\kappa_0, \kappa_1, L)$ away from their resonances, which are excluded
near $(0, d)$, the implicit function theorem applies.
\end{proof}

The Jacobi proportionality at the heart of this proof can be checked
directly, and we do so in the accompanying notebook, where the finite-difference
determinant divided by $\sin d$ equals $-1.0000$ to
within $1.1 \times 10^{-4}$ across edge lengths from $0.2$ to $2.2$ on the
sphere. The theorem establishes that the construction is well defined near
geodesic data, and in practice the solution basin turns out to be far
larger than the guaranteed neighbourhood. Across the fairness and stress
experiments of Section~\ref{sec:numerics}, comprising $6451$ planar and
$593$ spherical insertions on the smooth benchmarks together with $1512$ insertions on the stress
example, every retained Newton solve converged from the geodesic initial
guess, with a median of one planar and three spherical updates, maxima of
three and five, final residuals below $4.4 \times 10^{-14}$ and
$9.7 \times 10^{-13}$, positive lengths throughout, and Jacobian condition
numbers below $250$ and $33$. One exception, on a noisy stress edge where
the Newton iteration escaped the local branch and was intercepted by the
admissibility safeguard, is documented in Section~\ref{sec:numerics}.

Having settled solvability, we can now ask what the scheme does on the
simplest possible data, namely samples of geodesics and circles. The
answer follows from two facts we have already noted, namely that the
estimator is exact on constant-curvature curves and that the adapted
profile is exactly constant on equal data.

\begin{proposition}[Local geodesic and constant-curvature reproduction]
\label{prop:repro}
Let the stencil points lie in order on a geodesic, or on a complete curve
$C$ of constant geodesic curvature $\kappa_c \neq 0$, that is a circle in
the plane or a small circle on the sphere, and let the sampling be dense
enough that consecutive points are nonconjugate and the data
$(\kappa_c, \kappa_c)$ lie in the solvability neighbourhood of
Theorem~\ref{thm:solvable} for each edge, with the arc of $C$ between the
endpoints contained in the corresponding local solution branch. Then the
adapted rule, raw or filtered, inserts the geodesic midpoint in the first
case and a point of $C$ in the second. The sensitive ablation profile
admits no nonzero constant solution for $K \neq 0$ and therefore does not
reproduce nongeodesic circles on the sphere.
\end{proposition}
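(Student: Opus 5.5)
The plan is to exhibit the constant-curvature arc itself as a solution of the shooting problem, and then use the uniqueness part of Theorem~\ref{thm:solvable} to conclude that the scheme selects exactly this arc.

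The first step is to compute the curvature data on such stencils. The circumscribed-circle estimators \eqref{eq:menger} and \eqref{eq:sphcirc} are exact on constant-curvature curves. So every raw estimate at a stencil vertex equals $\kappa_c$, or $0$ in the geodesic case. Since the prefilter has weights $(-3,22,-3)/16$ summing to one, it maps a constant sequence to itself. Hence both the raw and the filtered edge data are $(\hat\kappa_j,\hat\kappa_{j+1}) = (\kappa_c,\kappa_c)$.

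The second step is the profile. With equal data we have $\bar\kappa=\kappa_c$, and the boundary values of $\eta$ in \eqref{eq:adapted} vanish. Provided the linear boundary-value problem is nonresonant at the relevant $L$, which is part of the solvability neighbourhood hypothesis, its unique solution is $\eta\equiv 0$. The prescribed profile is therefore constant and equal to $\kappa_c$ for every admissible $L$.

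The third step is to verify the arc of $C$ solves the shooting problem. Let $L_C$ be the length of the arc of $C$ from $p_j$ to $p_{j+1}$ in the traversal direction, and let $\phi_C$ be the angle between its initial tangent and the geodesic direction. The curve is then integrated from $p_j$ with initial angle $\phi_C$ and constant geodesic curvature $\kappa_c$. In the plane, $\theta'=\kappa_c$ gives a circle of radius $1/|\kappa_c|$. On the sphere, \eqref{eq:frenet} with constant $\kappa_g$ gives the small circle of intrinsic radius $\rho$ with $\cot\rho=\kappa_c$. In either case, uniqueness of ODE solutions identifies the integrated curve with $C$. So $F(\phi_C,L_C)=0$.

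By hypothesis $(\phi_C,L_C)$ lies in the local branch, so uniqueness in Theorem~\ref{thm:solvable} gives $(\phi^\ast,L^\ast)=(\phi_C,L_C)$. The inserted point $\gamma(\tfrac12 L^\ast)$ is then the arc-length midpoint of the arc, and in particular a point of $C$. The geodesic case is the same argument with $\kappa_c=0$: the solution is $(0,d)$, the segment is the geodesic, and the half-arc-length point is the geodesic midpoint.

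For the sensitive profile, suppose a constant $\kappa\neq0$ satisfies $\kappa''=-K\kappa$. This forces $K\kappa=0$, which is impossible when $K=1$. So with boundary data $(\kappa_c,\kappa_c)$ the profile is nonconstant, and the reconstructed segment has nonconstant geodesic curvature and is not an arc of $C$. Since $C$ has constant curvature, the segment cannot be a subarc of $C$ unless it coincides with the arc, which would force constant curvature.

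The remaining subtlety is that the segment could still pass through $C$ at its midpoint by coincidence. The claim as stated concerns only the reconstructed curve, so this is enough for the statement, with an optional note that the midpoint generically leaves $C$.

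The main obstacle I expect is matching the arc of $C$ with the local branch. Theorem~\ref{thm:solvable} only guarantees uniqueness near $(0,d)$, so one must ensure $(\phi_C,L_C)$ lies inside that neighbourhood. This is exactly why the statement carries that assumption explicitly, and I would invoke the hypothesis there rather than prove it. A secondary point is orientation: the arc must be traversed with the sign convention of Section~\ref{sec:scheme}, so that the integrated curvature is $+\kappa_c$ and not $-\kappa_c$.
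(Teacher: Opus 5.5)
Your proposal is correct and is essentially the paper's own proof. It uses the same steps: exact circumscribed-circle estimates, constant preservation by the prefilter, vanishing boundary data so the adapted profile is the constant $\kappa_c$, the arc of $C$ as the shooting solution singled out by the uniqueness hypothesis (with $\kappa_c = 0$ giving the geodesic midpoint), and the observation that a constant solves $\kappa'' = -K\kappa$ only if $K\kappa = 0$. Your added details (nonresonance for $\eta \equiv 0$, ODE uniqueness, orientation, and the fact that the sensitive segment is not an arc of $C$) only make explicit what the paper leaves implicit; as in the paper, the claim that the sensitive rule's inserted point actually leaves $C$ is not proved and rests on the numerical evidence.
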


\begin{proof}
Consider first the geodesic case. Here both circumscribed circles
degenerate and all estimates vanish, so the profile vanishes, and the
unique local shooting solution is the geodesic segment, whose
half-arc-length point is the geodesic midpoint. Now consider
constant-curvature data. Both estimates equal $\kappa_c$ exactly, and the
prefilter maps the constant sequence to itself because its coefficients
sum to one, so the edge data are $(\kappa_c, \kappa_c)$ in both variants.
The adapted profile is then exactly constant, since
$\bar\kappa = \kappa_c$ and the boundary data of $\eta$ vanish. A
constant-curvature segment of curvature $\kappa_c$ joining two points of
$C$ at the correct orientation is an arc of $C$. It satisfies the endpoint
condition, and by hypothesis it is the unique local shooting solution, so
the half-arc-length point lies on $C$. Finally, for the sensitive
obstruction, substituting a constant into $\kappag'' = -K\kappag$ forces
$K\kappag = 0$, which has no nonzero constant solution when $K \neq 0$.
\end{proof}

These reproduction properties are confirmed numerically to the precision
one would hope for. Planar circle reproduction holds to
$1.1 \times 10^{-16}$ for the raw and the filtered rule alike, spherical
small-circle reproduction holds to $7.0 \times 10^{-11}$, which is the
tolerance of the segment integrator, and the sensitive ablation incurs a
spherical circle error of $4.8 \times 10^{-4}$ at the tested density,
exactly as the proposition predicts. For comparison, the six-point scheme
does not reproduce circles at all. At the same planar density its radial
error is $2.8 \times 10^{-4}$. Exact constant-curvature reproduction is
thus a genuine property of the final operator, and one not possessed by
the stationary affine Deslauriers--Dubuc schemes we compare against.

\section{The Exact Planar Linearisation}
\label{sec:linearisation}

We now come to the central analytical results of the paper. The question we
wish to answer is why the raw rule, despite its excellent geometry, loses
fairness under refinement, and the standard route to such questions runs
through the scheme's linearisation on perturbed regular
data~\cite{ewald2015holder,grohs2010general}. The asymptotic behaviour
of a
nonlinear subdivision scheme is governed by the linear scheme it reduces to
near regular configurations. What is remarkable about the present rule is
that this linearisation can be computed exactly, and that the answer turns
out to be a scheme everyone knows.

\subsection{Setting up the Expansion}
\label{sec:expansion}

We fix the perturbation framework first. Consider uniformly sampled
straight-line data perturbed in the normal direction,
$p_j = (jh,\ \varepsilon y_j)$, and expand every ingredient of the planar
rule to first order in $\varepsilon$. All displayed identities below
relate
the first-order coefficients. Three expansions are needed. The raw Menger
estimate satisfies $\kappa_j^{M} = \varepsilon k_j + O(\varepsilon^2)$
with
\begin{equation}
  k_j = \frac{y_{j-1} - 2y_j + y_{j+1}}{h^2},
  \label{eq:menger_lin}
\end{equation}
which is simply the scaled second difference, as one would expect of a
curvature estimate. The segment length satisfies
$L = h + O(\varepsilon^2)$. Finally, the inserted point satisfies
$\Traw(p)_{2j+1} =
\bigl((j + \nicefrac12)h,\ \varepsilon\, y_{j+\nicefrac12}\bigr) +
O(\varepsilon^2)$, and integrating the small-slope Frenet equations with
any profile that interpolates the edge data linearly to first order, and
then enforcing the endpoint condition, yields the first-order midpoint
coefficient
\begin{equation}
  y_{j + \nicefrac12}
  = \frac{y_j + y_{j+1}}{2}
  - \frac{h^2}{16}\,\bigl(k_j^{\mathrm{edge}} + k_{j+1}^{\mathrm{edge}}\bigr),
  \label{eq:midpoint_lin}
\end{equation}
where $(k_j^{\mathrm{edge}}, k_{j+1}^{\mathrm{edge}})$ are the first-order
coefficients of the curvature data actually supplied to the edge. Formula
\eqref{eq:midpoint_lin} is the entire linearised scheme. Everything now
depends on what those data are, and this is precisely the lever that the
prefilter will pull.

\subsection{The Raw Rule Linearises to the Four-Point Scheme}
\label{sec:rawlin}

Suppose first that the raw estimates are used directly. Substituting
\eqref{eq:menger_lin} into \eqref{eq:midpoint_lin} gives the following
result, which explains the raw rule's behaviour completely.

\begin{theorem}[Linearisation of the raw rule]
\label{thm:dgl}
With raw data \eqref{eq:menger_lin}, the linearisation of the planar raw
rule on perturbed uniform straight-line data is exactly the four-point
scheme of Dyn, Levin, and Gregory,
\begin{equation}
  y_{j+\nicefrac12}
  = \frac{-y_{j-1} + 9y_j + 9y_{j+1} - y_{j+2}}{16}.
  \label{eq:dgl}
\end{equation}
\end{theorem}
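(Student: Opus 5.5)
The proof is a direct substitution into the linearised midpoint formula \eqref{eq:midpoint_lin}, so the plan has two parts. First, justify the three first-order expansions of Section~\ref{sec:expansion} in the raw setting. Second, carry out the algebra.

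For the raw rule the edge data are the unfiltered Menger estimates themselves, so $k_j^{\mathrm{edge}} = k_j$ and $k_{j+1}^{\mathrm{edge}} = k_{j+1}$, with $k_j$ given by \eqref{eq:menger_lin}. Substitution gives
\[
y_{j+\nicefrac12} = \frac{y_j+y_{j+1}}{2} - \frac{1}{16}\bigl(y_{j-1}-2y_j+y_{j+1} + y_j - 2y_{j+1} + y_{j+2}\bigr).
\]
The factor $h^2$ cancels exactly. Collecting coefficients gives $-\tfrac1{16}$ on $y_{j-1}$, $\tfrac12+\tfrac1{16}=\tfrac9{16}$ on $y_j$, the same $\tfrac9{16}$ on $y_{j+1}$, and $-\tfrac1{16}$ on $y_{j+2}$. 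This is \eqref{eq:dgl}, and the even rule is the identity because the scheme is interpolatory.

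The substance lies in checking the three expansions.

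\emph{Menger expansion.} Expand \eqref{eq:menger}. The determinant equals $h\varepsilon(y_{j-1}-2y_j+y_{j+1})$ exactly. Each edge length equals $h+O(\varepsilon^2)$, and the chord $p_{j+1}-p_{j-1}$ has length $2h+O(\varepsilon^2)$. Together these give $k_j = (y_{j-1}-2y_j+y_{j+1})/h^2$.

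\emph{Segment length and midpoint.} By Theorem~\ref{thm:solvable}, the shooting solution $(\phi^\ast,L^\ast)$ depends smoothly on the data, so it can be expanded in $\varepsilon$.
\begin{itemize}
\item Rotate and translate so that the edge runs from $p_j$ to $p_{j+1}$. The chord length is $h+O(\varepsilon^2)$. The arc deviates from the chord by $O(\varepsilon)$, so its excess length is $O(\varepsilon^2)$, giving $L^\ast = h + O(\varepsilon^2)$.
\item To first order the tangent angle obeys $\theta' = \kappa$, and the normal deviation from the chord obeys $w'' = \kappa(s)$ with $w(0)=w(h)=0$. The boundary value $w(h)=0$ is exactly what fixes $\phi$.
\item The profile interpolates the data linearly to first order. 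For the adapted profile, $\bar\kappa^2 = O(\varepsilon^2)$ and $K=0$ in the plane, so the profile is the linear interpolant up to $O(\varepsilon^2)$.
\item With $\kappa(s) = \varepsilon\bigl(k_j(1-s/h) + k_{j+1}s/h\bigr)$, the Green's function of $\partial_{ss}$ on $[0,h]$ evaluated at $s=h/2$ gives $w(h/2) = -\tfrac{h^2}{16}(k_j+k_{j+1})\varepsilon$, using $\int_0^h G(h/2,s)\,ds = -h^2/8$ and the symmetry of $G$ about $h/2$.
\item The inserted point is taken at arc length $L^\ast/2 = h/2 + O(\varepsilon^2)$. Its tangential offset from the chord midpoint is $O(\varepsilon^2)$, and the chord's own rotation contributes the linear interpolation $\tfrac12(y_j+y_{j+1})\varepsilon$ to the ordinate. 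Together these recover \eqref{eq:midpoint_lin}.
\end{itemize}

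The main obstacle is this last step: showing rigorously that the half-arc-length point differs from the chord-based parametrisation only at $O(\varepsilon^2)$, and that the nonlinear adapted profile together with its $L$-dependence in \eqref{eq:adapted} adds nothing at first order. I would handle both by smoothness in $\varepsilon$ from Theorem~\ref{thm:solvable}, differentiating the shooting system once at $\varepsilon=0$ and reading off the first-order terms.
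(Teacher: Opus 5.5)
Your proposal is correct and matches the paper's proof, which is exactly the substitution of \eqref{eq:menger_lin} into \eqref{eq:midpoint_lin} followed by collecting coefficients. You also derive \eqref{eq:midpoint_lin} yourself, via the Green's function for $w''=\kappa$ with $w(0)=w(h)=0$ and the observation that $L^\ast$ and the tangential position of the half-arc-length point change only at $O(\varepsilon^2)$; the paper only sketches that derivation in Section~\ref{sec:expansion} and otherwise relies on symbolic verification, so your extra step is a welcome addition.
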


\begin{proof}
Substituting \eqref{eq:menger_lin} into \eqref{eq:midpoint_lin} gives
\[
  \frac{y_j + y_{j+1}}{2}
  - \frac{(y_{j-1} - 2y_j + y_{j+1}) + (y_j - 2y_{j+1} + y_{j+2})}{16}
  = \frac{-y_{j-1} + 9y_j + 9y_{j+1} - y_{j+2}}{16}.
\]
Both \eqref{eq:midpoint_lin} and this substitution are verified by exact
symbolic computation in the accompanying notebook.
\end{proof}

\begin{corollary}[The obstruction]
\label{cor:barrier}
The critical H\"older exponent of the four-point scheme is two, one full
regularity class below the six-point value of approximately $2.830$. Both values follow from the classical regularity machinery for stationary
schemes~\cite{daubechies1992two,rioul1992}. Since the raw nonlinear rule
has the four-point scheme as its normal linearisation, a six-point-type
$C^2$ regularity conclusion cannot follow from its linearised core.
Moreover, the geometric growth of its measured curvature-variation energy
under refinement, whose ratio will be seen in
Section~\ref{sec:numerics} to stabilise at the four-point signature value,
is exactly the manifestation one would expect of this obstruction.
\end{corollary}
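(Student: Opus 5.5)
The corollary makes three claims, and I would prove them in that order. First I compute the critical Hölder exponent of the four-point scheme \eqref{eq:dgl}. Second I explain why this caps what the linearisation of $\Traw$ can certify. Third I link this cap to the measured energy growth.

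For the first claim I use the symbol of \eqref{eq:dgl},
\[
a(z)=\tfrac{1}{16}(1+z)^4 z^{-3}(-1+4z-z^2).
\]
This factors as $a(z)=\bigl(\tfrac{1+z}{2}\bigr)^4 z^{-3}\,b(z)$ with $b(z)=-z^2+4z-1$. The scheme for the fourth divided difference has symbol $\tfrac{1}{2}\cdot 2^{4}\cdot(\dots)$, and the standard route is to compute the joint spectral radius of the two restricted difference operators in the Rioul/Daubechies--Lagarias framework. For the four-point scheme with $w=1/16$ this is classical. The second difference scheme has mask $\tfrac{1}{2}(-1,\,2,\,-1)$ acting on the coset decomposition, together with a contribution from a single eigenvalue $1/4$. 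The joint spectral radius then equals $1/4$ exactly, with no polynomial factor gain. This gives critical exponent $-\log_2(1/4)=2$, and the exponent is not attained, so the scheme is $C^{2-\epsilon}$ but not $C^2$.

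The upper bound is the cleanest way to see that the exponent is exactly two. The subdivision matrix restricted to the invariant neighbourhood has eigenvalues $1,\tfrac12,\tfrac14,\tfrac14,\tfrac18,\dots$. The double eigenvalue $\tfrac14$ carries a nontrivial Jordan-type coupling, or a non-polynomial eigenvector, which obstructs $C^2$. For DD6 I simply cite the value $2.830$ from \cite{daubechies1992two,rioul1992}, since the paper already states it.

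For the second claim I invoke Theorem~\ref{thm:dgl}. The first-order normal response of $\Traw$ at uniform straight data is exactly \eqref{eq:dgl}. Any proof of $C^2$ regularity via proximity or linearisation would have to transfer that smoothness from the linear companion. Since the companion itself fails to be $C^2$, the linearised core cannot certify a DD6-type $C^2$ conclusion. This is a statement about the method, not a claim that the nonlinear limit is rough, and I would phrase it accordingly.

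For the third claim, the energy signature, I would argue heuristically and defer the numerical confirmation to Section~\ref{sec:numerics}. Discrete curvature behaves like second differences scaled by $h^{-2}$. The discrete energy $\sum (\Delta\kappa/h)^2 h$ therefore scales like $h^{-5}\sum|\Delta^3 y|^2$. Under \eqref{eq:dgl}, the third differences decay per level at the rate set by the spectral radius of the third-difference scheme on generic data. The resulting per-level energy ratio is a fixed constant exceeding one, and I identify this constant with the four-point signature value.

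The main obstacle is making the Hölder exponent claim rigorous in a self-contained way, in particular showing that the joint spectral radius is exactly $1/4$ and not smaller. My first attempt would be a short explicit norm computation on the two $2\times2$ or $3\times3$ restricted matrices. If that does not close cleanly, I would fall back on citing the classical literature.
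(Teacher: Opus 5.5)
Your overall route is the paper's. There the corollary rests only on the citations \cite{daubechies1992two,rioul1992}, on Theorem~\ref{thm:dgl}, and on the forward pointer to Table~\ref{tab:planar}. Your reading of the second claim, as a limit on the method rather than on the limit curve, is also the right one.

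Where you go beyond citation, two steps would fail as written. The first concerns the exponent. The second-difference rules of $\Sdgl$ are $(-d_{i-1}+6d_i-d_{i+1})/16$ and $(d_i+d_{i+1})/8$. That is the mask $\tfrac{1}{16}(-1,2,6,2,-1)$ with one-step norm $\tfrac12$, not $\tfrac12(-1,2,-1)$. Your spectrum $1,\tfrac12,\tfrac14,\tfrac14,\tfrac18$ with a $2\times 2$ Jordan block at $\tfrac14$ is correct. But it already settles the direction you call the obstacle: it gives joint spectral radius at least $\tfrac14$, hence exponent at most two. The same block is why your planned computation cannot close. The $k$-fold products of the second-difference rules have norms of order $k\,4^{-k}$, so there \emph{is} a polynomial factor, and no finite norm computation at that level reaches $\tfrac14$.

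The fix is to work one difference higher. With $D_i=(\Delta^3 y)_i$, the refined third differences are $(3D_{i-1}-D_i)/16$ and $(-D_{i-1}+3D_i)/16$. Hence $\norm{\Delta^3\Sdgl\, y}_\infty\le\tfrac14\norm{\Delta^3 y}_\infty$ in a single step, which gives exponent at least two (the limit is $C^{2-\epsilon}$ for every $\epsilon>0$). The Jordan eigenvector, sampled $x^2$, is annihilated by $\Delta^3$, while the generalised eigenvector is not. So third differences decay no faster than $4^{-k}$. On second differences the block produces the $k\,4^{-k}$ rate, i.e.\ $h^2\log(1/h)$ behaviour, which keeps the exponent two unattained. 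This makes the first claim fully self-contained.

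The second problem is that the energy heuristic uses the wrong norm. Suppose third differences decayed at the sup-norm rate $\tfrac14$ everywhere. Then $E\sim h^{-5}\sum_j\abs{\Delta^3 y_j}^2$ would grow by $2\cdot 2^5\cdot(\tfrac14)^2=4$ per level, not by the observed value near $2.17$. Since $E$ is an $\ell^2$ sum, the relevant rate is the $\ell^2$ growth of third differences. That rate is the spectral radius of the transfer operator $f\mapsto\tfrac12\sum_{\varepsilon\in\{0,\pi\}}w(\tfrac{\xi}{2}+\varepsilon)\,f(\tfrac{\xi}{2}+\varepsilon)$, where $w(\xi)=\cos^2(\xi/2)(2-\cos\xi)^2/16$ is the squared modulus of the third-difference symbol. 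On cosine polynomials of degree at most three it is a $4\times 4$ matrix with eigenvalues $(5\pm\sqrt{153})/256$, $1/32$ and $1/256$. The predicted energy ratio is therefore $2^5(5+\sqrt{153})/256=(5+\sqrt{153})/8\approx 2.171$. Equivalently it is $2^{6-2s_2}$, with $L^2$-Sobolev exponent $s_2\approx 2.441$ for the four-point limit function. The raw-rule ratios $2.184$, $2.178$, $2.174$, $2.173$ of Section~\ref{sec:planarnum} are converging to this value. So the ``four-point signature'' is a Sobolev invariant, distinct from the H\"older exponent two. Your third step should identify it that way rather than through the sup-norm spectral radius.
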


The corollary deserves a moment of reflection, because it recasts the raw
rule's failure in a useful light. The failure is not a defect of the
implementation, nor of the curvature estimator, nor of any particular profile. It is a structural property of the construction, visible already
at first order, and consequently no tuning of the profile can remove it.
The value of an exact linearisation is that it tells us this with
certainty, and, as we show next, it simultaneously tells us what the
repair must be.

\subsection{The Curvature Prefilter}
\label{sec:filter}

The midpoint formula \eqref{eq:midpoint_lin} shows that the linearised
scheme is determined entirely by the curvature data supplied to the edge.
This raises the natural question of whether there is a preprocessing of
the raw estimates whose effect, at the linearised level, is to turn the
four-point mask into the six-point one. The answer is affirmative, and
uniquely so.

\begin{theorem}[The DD6-compatible curvature prefilter]
\label{thm:filter}
Let the edge data be the filtered estimates
\begin{equation}
  \hat\kappa_j =
  \frac{-3\kappa_{j-1}^{M} + 22\kappa_j^{M} - 3\kappa_{j+1}^{M}}{16}.
  \label{eq:filter}
\end{equation}
Then the linearisation of the planar filtered rule on perturbed uniform
straight-line data is exactly the six-point Deslauriers--Dubuc scheme,
\begin{equation}
  y_{j+\nicefrac12}
  = \frac{3y_{j-2} - 25y_{j-1} + 150y_j + 150y_{j+1} - 25y_{j+2} +
  3y_{j+3}}{256},
  \label{eq:dd6lin}
\end{equation}
and \eqref{eq:filter} is the unique filter with this property among
symmetric, constant-preserving three-point curvature filters. Because its
coefficients sum to one, it maps constant curvature data to themselves, so
the reproduction results of Proposition~\ref{prop:repro} survive filtering
unchanged.
\end{theorem}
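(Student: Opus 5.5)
The proof runs parallel to that of Theorem~\ref{thm:dgl}, using the master formula \eqref{eq:midpoint_lin}, which says the linearised inserted point depends only on the first-order coefficients of the edge data. Since the filter \eqref{eq:filter} is linear with fixed coefficients, its first-order coefficient is the filtered combination of the Menger coefficients \eqref{eq:menger_lin}:
\[
\hat k_j = \frac{-3k_{j-1}+22k_j-3k_{j+1}}{16}.
\]
I would substitute $\hat k_j$ and $\hat k_{j+1}$ into \eqref{eq:midpoint_lin}. The sum $\hat k_j+\hat k_{j+1}$ equals $(-3k_{j-1}+19k_j+19k_{j+1}-3k_{j+2})/16$. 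Each $k$ is a second difference over $h^2$, so
\[
y_{j+\nicefrac12} = \frac{y_j+y_{j+1}}{2} - \frac{1}{256}\,\bigl(-3\Delta^2_{j-1}+19\Delta^2_j+19\Delta^2_{j+1}-3\Delta^2_{j+2}\bigr),
\]
where $\Delta^2_i = y_{i-1}-2y_i+y_{i+1}$. Expanding and collecting coefficients of $y_{j-2},\dots,y_{j+3}$ gives $3,-25,150,150,-25,3$ over $256$, which is \eqref{eq:dd6lin}. For example, the coefficient of $y_j$ is $128/256 - (-3\cdot(-2)\ldots)$, and I would tabulate it term by term.

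For uniqueness, I parametrise a general symmetric three-point filter as $(a,b,a)$ with constant preservation $2a+b=1$. Repeating the substitution gives a one-parameter family of six-point masks. The outermost coefficient, on $y_{j-2}$ and $y_{j+3}$, arises only from the $a k_{j-1}$ and $a k_{j+2}$ terms. It equals $-a/16$, and matching it to $3/256$ forces $a=-3/16$, hence $b=22/16$. The remaining coefficients then agree automatically, so no further condition could conflict. The only care point is a sign or normalisation slip here, since the family is affine in $a$ and a single coefficient determines it.

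Finally, the sum $2a+b=1$ means a constant sequence $\kappa^{M}_i\equiv\kappa_c$ is mapped to itself. Constant preservation is what the proof of Proposition~\ref{prop:repro} used, so reproduction of geodesics and circles carries over unchanged. I expect no real obstacle, since all the analytic work is in \eqref{eq:midpoint_lin}, which is taken as established. The main risk is bookkeeping in the coefficient expansion, which I would double-check by applying the mask to the polynomials $1,j,j^2,j^3,j^4,j^5$, all of which DD6 reproduces.
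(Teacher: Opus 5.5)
Your proposal is correct and follows the paper's proof essentially verbatim. You push the first-order filter coefficients through the midpoint formula \eqref{eq:midpoint_lin} to obtain the DD6 row, and for uniqueness you parametrise symmetric constant-preserving filters as $(a,1-2a,a)$ and match the outer coefficient $-a/16=3/256$. The only difference is bookkeeping order: you sum the filtered $k$'s before expanding the second differences, whereas the paper expands $h^2\hat k_j$ in the $y$'s first. Your half-written $y_j$ example should read $128/256-(-3-38+19)/256=150/256$.
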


\begin{proof}
Substituting \eqref{eq:menger_lin} into \eqref{eq:filter} gives the
first-order coefficient
\[
  h^2 \hat k_j
  = \frac{-3y_{j-2} + 28y_{j-1} - 50y_j + 28y_{j+1} - 3y_{j+2}}{16},
\]
so that $h^2(\hat k_j + \hat k_{j+1}) = (-3y_{j-2} + 25y_{j-1} - 22y_j
- 22y_{j+1} + 25y_{j+2} - 3y_{j+3})/16$, and \eqref{eq:midpoint_lin} then
yields \eqref{eq:dd6lin}. Both steps are verified symbolically in the
accompanying notebook. For the uniqueness claim, observe that a
symmetric constant-preserving three-point filter has weights
$(a, 1 - 2a, a)$, and inserting it into \eqref{eq:midpoint_lin} produces a
six-point mask whose coefficients are affine functions of $a$. Matching
any single outer coefficient of the Deslauriers--Dubuc mask, for instance
$3/256$ at $y_{j-2}$, forces $a = -3/16$, and one checks that this value
matches all six coefficients simultaneously.
\end{proof}

The theorem cancels the four-point barrier in the planar linearisation and
replaces it with the six-point linearisation. Converting this exact
linearised statement into nonlinear convergence and a regularity class is
the business of Sections~\ref{sec:distortion} to~\ref{sec:main}. What
remains to be checked is that the filter
is compatible with the adapted profile of the final operator, and this is
the content of the following corollary.

\begin{corollary}[The filtered adapted operator]
\label{cor:final}
The final operator of Section~\ref{sec:scheme} possesses both intended
structural properties at once. On the one hand, the prefilter preserves
constant curvature, so the adapted profile continues to reproduce circles
exactly after filtering. On the other hand, near a straight line the mean
$\bar\kappa$ is of first order in $\varepsilon$, so the coefficient
$\bar\kappa^2 + K$ of the adapted profile differs from $K$ only at second
order, and the profile agrees with linear interpolation of the edge data
to first order. Consequently, the filtered adapted operator inherits the
six-point linearisation of Theorem~\ref{thm:filter} verbatim.
\end{corollary}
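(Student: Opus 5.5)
The corollary has two parts, and I would prove them separately.

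\textbf{Circle reproduction after filtering.} This part is immediate from what is already proved. The prefilter \eqref{eq:filter} has weights summing to one, so a constant Menger sequence $\kappa_c$ is mapped to $\kappa_c$. The edge data are therefore $(\kappa_c,\kappa_c)$, so $\bar\kappa=\kappa_c$ and the boundary data of $\eta$ in \eqref{eq:adapted} vanish. Since $\bar\kappa^2+K>0$ except in the trivial planar geodesic case, the only solution is $\eta\equiv0$, which is the argument already given in Proposition~\ref{prop:repro}. I would simply cite that proposition, together with the last sentence of Theorem~\ref{thm:filter}.

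\textbf{Linearisation.} This is the substantive part, and I would work in the planar setting of Section~\ref{sec:expansion}, so $K=0$. Take $p_j=(jh,\varepsilon y_j)$. By \eqref{eq:menger_lin} and the filter, the edge data are $\hat\kappa_j=\varepsilon\hat k_j+O(\varepsilon^2)$, and hence $\bar\kappa=O(\varepsilon)$ and $\bar\kappa^2=O(\varepsilon^2)$.

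Set $\omega^2=\bar\kappa^2$. The boundary value problem \eqref{eq:adapted} then has the explicit solution
\[
\eta(s)=\frac{\eta_0\sin(\omega(L-s))+\eta_1\sin(\omega s)}{\sin(\omega L)},
\]
with $\eta_0,\eta_1=O(\varepsilon)$. Expanding in $\omega$ gives
\[
\eta(s)=\eta_0\Bigl(1-\frac{s}{L}\Bigr)+\eta_1\frac{s}{L}+O(\omega^2L^2\,\abs{\eta_i})=\text{linear}+O(\varepsilon^3),
\]
uniformly on $[0,L]$. The resolvent is smooth at $\omega=0$, since the resonance $\omega L=\pi$ is far away, so this expansion is legitimate. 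Consequently $\kappag=\bar\kappa+\eta$ equals the linear interpolant of $(\hat\kappa_j,\hat\kappa_{j+1})$ up to $O(\varepsilon^3)$.

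\textbf{Passing to the inserted point.} The key step is to show that a change of the profile by $O(\varepsilon^2)$ changes the inserted point only at $O(\varepsilon^2)$. For this I would use Theorem~\ref{thm:solvable}. The shooting solution $(\phi^\ast,L^\ast)$ depends smoothly on the data near zero. The same holds for the profile, viewed as a function of $(\kappa_0,\kappa_1,L)$, because the implicit function theorem is applied with the profile as a smooth parameter. Composing these smooth maps with Frenet integration, the map from profile to midpoint $\gamma(\tfrac12L^\ast)$ is smooth, and its first-order term in $\varepsilon$ sees only the first-order coefficient of the profile. That coefficient is the linear interpolant of $(\hat k_j,\hat k_{j+1})$. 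Therefore \eqref{eq:midpoint_lin} applies, since it was derived for any profile interpolating the edge data linearly to first order, and Theorem~\ref{thm:filter} yields \eqref{eq:dd6lin}.

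\textbf{Main obstacle.} The delicate point is justifying uniformity. The $O(\varepsilon^2)$ remainder in the profile must be uniform in $s$ and smooth in $L$, so that it survives differentiation in the Newton/IFT step. The shooting determinant, which is proportional to $d=h$, must also stay nonzero. I would handle both by treating $(\varepsilon,\phi,L)$ jointly and applying the implicit function theorem at $(0,0,h)$. The $\sin$-quotient formula above is real-analytic in $(\omega^2,L)$ near $\omega=0$, and this makes the uniformity automatic.

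A final remark is needed because the safeguard reverts to raw data when the shooting fails. Near $\varepsilon=0$ the solve succeeds by Theorem~\ref{thm:solvable}, so the safeguard never activates, and the filtered data are indeed the ones used.
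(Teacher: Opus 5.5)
Your proposal is correct and follows the paper's own one-line justification: constant preservation of the prefilter together with Proposition~\ref{prop:repro} gives circle reproduction, and $\bar\kappa=O(\varepsilon)$ makes the planar adapted profile agree with linear interpolation of the edge data to first order, so \eqref{eq:midpoint_lin} and Theorem~\ref{thm:filter} apply. Your explicit sine-quotient expansion and the implicit-function step from profile to inserted point simply make that line rigorous. One small correction: in the circle part, $\eta\equiv0$ is the unique solution because of non-resonance, $\sqrt{\bar\kappa^2+K}\,L<\pi$ (guaranteed by dense sampling), and not because $\bar\kappa^2+K>0$ --- positivity is precisely the case in which nontrivial Dirichlet solutions can occur.
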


The first-order agreement asserted by the corollary can also be observed directly. On randomly perturbed line data, the adapted and clothoid
insertions agree to below $10^{-11}$ at perturbation amplitudes $10^{-2}$
and $10^{-3}$, which is the solver tolerance, so the observed agreement is
in fact stronger than the first-order statement requires.

The filter itself merits a further remark. Its side coefficients are
negative, so it is a sharpening operator rather than an average, and its
role is to cancel, at the linearised level, the discrepancy between the
three-point circumscribed-circle estimate and the curvature data that the
six-point insertion implicitly encodes. Sharpening, of course, can amplify
noise, and in Section~\ref{sec:numerics} we document one insertion on a
noisy stress example in which precisely this happened and the admissibility
safeguard of Section~\ref{sec:scheme} intervened. On the sphere, we apply
\eqref{eq:filter} to the estimates \eqref{eq:sphcirc} verbatim. A spherical
analogue of Theorem~\ref{thm:filter} is not proved here, and the spherical
results below are accordingly reported as experiments.

\subsection{Restriction to Lines and the Full Vector Linearisation}
\label{sec:fulllin}

Theorems~\ref{thm:dgl} and~\ref{thm:filter} linearise the rule under
normal perturbations of uniform straight data. The regularity theory to
come requires the full vector linearisation, and the natural starting
point is the observation that the scheme can be restricted to a line
exactly, with no perturbation expansion at all.

\begin{theorem}[Exact restriction to lines]
\label{thm:restriction}
Let the six stencil points $p_{j-2}, \ldots, p_{j+3}$ lie in order on a
straight line, with arbitrary positive spacings. Then the filtered and the
raw insertion on the edge $(p_j, p_{j+1})$ both equal the Euclidean
midpoint $\tfrac12(p_j + p_{j+1})$. Consequently $\Tf$ restricted to
collinear polygons is exactly midpoint subdivision, for every spacing
pattern.
\end{theorem}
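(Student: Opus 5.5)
The argument follows the pipeline of the construction step by step and checks that collinearity forces every quantity to vanish. No expansion is needed; we work exactly.

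First, the curvature estimates. For any three consecutive collinear points $p_{i-1}, p_i, p_{i+1}$ in order with positive spacings, the determinant in \eqref{eq:menger} vanishes. The denominator is strictly positive, since the spacings are positive and the points are in order, so $\norm{p_{i+1}-p_{i-1}}$ is the sum of the two edge lengths. Hence $\kappa_i^{M}=0$ exactly for $i=j-1,\dots,j+2$. These are precisely the raw estimates entering the filtered data $\hat\kappa_j,\hat\kappa_{j+1}$ through \eqref{eq:filter}; those involve $p_{j-2},\dots,p_{j+3}$, which is why six stencil points are assumed. By linearity of the filter, $\hat\kappa_j=\hat\kappa_{j+1}=0$. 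The raw data $\kappa_j^M,\kappa_{j+1}^M$ vanish as well. In either variant the edge data are $(0,0)$, uniformly in the spacing pattern.

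Next, the profile and the shooting. With $\kappa_0=\kappa_1=0$ we get $\bar\kappa=0$ and homogeneous boundary data in \eqref{eq:adapted}. For $K=0$ the equation is $\eta''=0$, whose unique solution is $\eta\equiv0$, so there is no resonance issue. The prescribed curvature therefore vanishes identically, and the Frenet system gives $\theta$ constant. The pair $(\phi,L)=(0,d)$ with $d=\norm{p_{j+1}-p_j}$ solves \eqref{eq:shoot}, giving the straight segment. Uniqueness holds outright: a zero-curvature curve from $p_j$ is a ray, and it reaches $p_{j+1}$ at arc length $L>0$ only if $\phi=0$ and $L=d$ (modulo $2\pi$ in $\phi$). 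This is consistent with the local branch of Theorem~\ref{thm:solvable}, and Newton started at the geodesic guess has zero residual there. The safeguard therefore never activates, and even if it did it would revert to the raw data, which are also zero. The half-arc-length point is $p_j+\tfrac d2\,(p_{j+1}-p_j)/d=\tfrac12(p_j+p_{j+1})$. This is also the geodesic-reproduction case of Proposition~\ref{prop:repro}, now with no uniform-spacing requirement.

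Finally, the global statement follows by applying this edge by edge. A collinear polygon is mapped by $\Tf$ to the polygon whose even points are the originals and whose odd points are the edge midpoints, which is midpoint subdivision. The refined polygon is again collinear with positive spacings, so the conclusion iterates to every level.

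The only delicate point is interpreting ``collinear closed polygon''. A closed polygon cannot be monotone along a line, so the global consequence must be read for bi-infinite or open chains, or locally on every edge whose six-point stencil is collinear and in order. I would state this explicitly. Beyond that the argument is mechanical; the main thing to check is that nondegeneracy of the Menger denominator and uniqueness of the shooting solution hold for arbitrary spacings, and both are immediate.
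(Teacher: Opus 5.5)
Your proposal is correct and follows the paper's proof essentially step for step: the Menger determinants vanish, the prefilter maps the zero sequence to itself, the adapted profile is identically zero, the chord is the shooting solution, and its half-arc-length point is the Euclidean midpoint, with no use of uniform spacing. Your extra checks are sound refinements that the paper leaves implicit: positivity of the Menger denominator, global rather than merely local uniqueness of the shooting solution, and the caveat that a closed polygon cannot be ordered monotonically along a line.
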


\begin{proof}
On collinear data every determinant in \eqref{eq:menger} vanishes, so all
raw Menger estimates are zero, and the prefilter \eqref{eq:filter} maps
the zero sequence to itself. The edge data are therefore $(0, 0)$, the
mean $\bar\kappa$ vanishes, and the boundary data of \eqref{eq:adapted}
vanish, so the adapted profile is identically zero. The reconstructed
segment with zero curvature profile is a straight segment, the unique
local solution of the shooting problem is the chord itself with $\phi = 0$
and $L$ equal to the chord length, and its half-arc-length point is the
Euclidean midpoint. No uniformity of the spacing was used at any step. The
identity is confirmed to $4.4 \times 10^{-16}$ on nonuniform collinear
data in the accompanying notebook.
\end{proof}

The theorem pins down the tangential behaviour of the scheme. Writing
$\Smid$ for midpoint subdivision, whose limit functions are the piecewise
linear interpolants of the data and whose critical H\"older exponent is
exactly one, we now combine the restriction with the normal linearisation
of Theorem~\ref{thm:filter} to obtain the full derivative. We write
$\mathbf e$, as before, for the uniform straight
chain $p_j = (j, 0)$, and we consider perturbations
$p_j = (j + \varepsilon\tau_j,\ \varepsilon\nu_j)$ with tangential and
normal components $\tau_j$ and $\nu_j$.

\begin{theorem}[Full block linearisation]
\label{thm:block}
To first order in $\varepsilon$, the filtered insertion on the central
edge satisfies
\begin{equation}
  \Tf(p)_{2j+1}
  =
  \Bigl(j + \tfrac12 + \varepsilon\,\frac{\tau_j + \tau_{j+1}}{2},\;\;
  \varepsilon\,\frac{3\nu_{j-2} - 25\nu_{j-1} + 150\nu_j + 150\nu_{j+1}
  - 25\nu_{j+2} + 3\nu_{j+3}}{256}\Bigr)
  + O(\varepsilon^2),
  \label{eq:blocklin}
\end{equation}
so that the derivative of $\Tf$ at $\mathbf e$ is the block scheme
$\Smid \oplus \Sdd$, acting by midpoint subdivision on tangential
perturbations and by DD6 on normal perturbations, with vanishing cross
terms. The raw rule satisfies the same identity with $\Sdgl$ in place of
$\Sdd$.
\end{theorem}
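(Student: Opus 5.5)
The plan is to use that the insertion map is smooth near $\mathbf e$ and to write its derivative as a $2\times 2$ block operator acting on $(\tau,\nu)$. Smoothness comes from Theorem~\ref{thm:solvable}: the Menger estimates \eqref{eq:menger} are smooth away from coincident points, the filter is linear, and the shooting solution depends smoothly on the data. The derivative is therefore linear in the perturbation, and the four blocks can be identified separately by setting $\nu = 0$ or $\tau = 0$.

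For the tangential-to-tangential and tangential-to-normal blocks I will take $\nu = 0$. Then the perturbed stencil $(j + \varepsilon\tau_j, 0)$ is collinear with positive spacings for small $\varepsilon$. Theorem~\ref{thm:restriction} then gives the inserted point exactly as the midpoint $\bigl(j + \tfrac12 + \varepsilon\frac{\tau_j + \tau_{j+1}}2, 0\bigr)$, with no $O(\varepsilon^2)$ error. This yields the $\Smid$ block and shows that the tangential-to-normal cross term vanishes identically.

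For the normal-to-normal block I will take $\tau = 0$. This is exactly the setting of Section~\ref{sec:expansion} with $h = 1$. Theorem~\ref{thm:filter}, together with Corollary~\ref{cor:final} for the adapted profile, gives the normal component \eqref{eq:dd6lin}. For the raw rule, Theorem~\ref{thm:dgl} gives $\Sdgl$ instead.

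It remains to show that the normal-to-tangential cross term vanishes, that is, that the tangential coordinate of the inserted point is $j + \tfrac12 + O(\varepsilon^2)$ when $\tau = 0$. I expect this to be the main step needing care. The idea is that every ingredient is even in $\nu$ to first order. Under the reflection $y \mapsto -y$ the data $\nu$ become $-\nu$, and by isometry equivariance the inserted point is reflected. Hence its $x$-coordinate $X(\varepsilon\nu)$ satisfies $X(\varepsilon\nu) = X(-\varepsilon\nu)$. Because $X$ is smooth in $\varepsilon$, its first derivative at $\varepsilon = 0$ is odd under $\nu \mapsto -\nu$ and linear in $\nu$, but it must also be even; it therefore vanishes.

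As a cross-check I will also do the direct computation. In the small-slope Frenet integration, $x'(s) = \cos\theta = 1 - O(\varepsilon^2)$ and $L = 1 + O(\varepsilon^2)$, so the half-arc-length abscissa is $j + \tfrac12 + O(\varepsilon^2)$. This matches the stated expansion $L = h + O(\varepsilon^2)$.

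Combining the four blocks gives \eqref{eq:blocklin}, and the same argument with raw data gives the raw version.
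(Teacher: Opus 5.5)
Your proposal is correct and follows the paper's proof essentially block by block. You use the exact restriction to lines for the tangential column, the reflection-evenness argument for the normal-to-tangential cross term, and Theorems~\ref{thm:dgl} and~\ref{thm:filter} (with the adapted-profile compatibility) for the normal block; your explicit appeal to smoothness to justify splitting the derivative into blocks, and your small-slope cross-check, are minor additions the paper leaves implicit.
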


\begin{proof}
We verify the four blocks of the derivative separately. For the
tangential response to tangential perturbations, observe that a purely
tangential perturbation keeps the data collinear, so by
Theorem~\ref{thm:restriction} the insertion is exactly the midpoint
$\tfrac12(p_j + p_{j+1})$ for every $\varepsilon$, whose tangential
coordinate is $j + \tfrac12 + \varepsilon(\tau_j + \tau_{j+1})/2$. The
tangential block is therefore exactly $\Smid$, and not merely to first
order. For the normal response to tangential perturbations, the same
restriction shows that collinear data produce a collinear insertion, so
the normal coordinate vanishes identically and this cross block is zero.
For the tangential response to normal perturbations, consider the
reflection $\nu \mapsto -\nu$ across the base line. The construction is
equivariant under this reflection, because every ingredient of the rule,
from circumscribed circles through the profile equation to arc length, is
isometry invariant, so the tangential coordinate of the insertion is an
even function of the normal perturbation and its first derivative
vanishes. Finally, the normal response to normal perturbations is
precisely the setting of Theorems~\ref{thm:dgl}
and~\ref{thm:filter}, which give the DD6 coefficient row for the
filtered rule and the four-point row for the raw rule. The blocks are
verified numerically in the accompanying notebook by Richardson-extrapolated
finite differences at $\mathbf e$, where the tangential
block matches the midpoint mask to $3.3 \times 10^{-12}$, the normal
block matches the DD6 mask to $1.0 \times 10^{-15}$, the raw normal block
matches the four-point mask to $1.1 \times 10^{-16}$, and both cross
blocks vanish to working precision.
\end{proof}

\section{Relative Distortion and the Failure of Straightening}
\label{sec:distortion}

The linearisation results place the filtered scheme squarely inside the
class of operators for which Ewald, Reif, and Sabin built their
regularity framework, and the natural next step is to run the scheme
through that framework. This section carries out the attempt and proves
that it fails, in a way that is structural rather than technical, and
the failure will dictate the design of the theory that follows.

\subsection{Relative Distortion and GLUE Schemes}
\label{sec:ers}

We now recall, in our notation, the framework of~\cite{ewald2015holder}.
From here through Section~\ref{sec:main} we work in the plane, where the
linearisation theorems of Section~\ref{sec:linearisation} hold exactly,
and we return to the sphere in Sections~\ref{sec:numerics}
and~\ref{sec:discussion}. The framework
concerns interpolatory-type local schemes acting on chains of points, and
its structural axioms name four properties. A scheme is a GLUE scheme if
it commutes with similarity transformations of the plane, if its two
refinement rules act locally on windows of $m + 1$ consecutive points, if
those rules are uniform in the window index and continuously
differentiable with H\"older continuous derivative in a neighbourhood of
the standard chain $\mathbf e$ of uniformly spaced collinear points, and
if equidistant collinear data are mapped to equidistant collinear data at
half the spacing, with an offset $\tau$ that fixes the parametric
alignment of the refined chain. Interpolatory schemes have $\tau = 0$.

The framework measures the departure of a chain $q$ of $n$ points from
uniform straight data by the relative distortion
\begin{equation}
  \kERS(q) = \frac{\abs{q}_2}{\abs{\Pi q}_1},
  \label{eq:distortion}
\end{equation}
in which $\abs{q}_2$ is the maximum Euclidean norm of the second
differences of the chain, $\Pi q$ is the best affine fit to the chain in
the window index, and $\abs{\Pi q}_1$ is the norm of the fitted increment.
The quantity is invariant under similarities, it vanishes exactly on
uniform collinear chains, and for a polygon $p$ we write $\kERS(p)$ for
the maximum of \eqref{eq:distortion} over all cyclic windows of $n$
consecutive points. A polygon sequence $p^{(\ell)} = T^\ell p$ is
straightened by the scheme if $\kERS(p^{(\ell)}) \to 0$, and it is
straightened at rate $\alpha$ if $2^{\ell\alpha}\kERS(p^{(\ell)})$ stays
bounded. Straightening is the entry condition for every theorem of the
framework. Convergence of the refined polygons is derived from
straightening, $C^1$ regularity from strong straightening, and
$C^{1,\alpha}$ regularity from straightening at rate $\alpha$, while the
workhorse that certifies straightening in examples is a finite-step
contraction estimate for \eqref{eq:distortion} on normalised chains.

The asymptotic part of the framework analyses the derivative of the
refinement rules at the standard chain. Similarity equivariance forces
this derivative to split into two linear schemes, a tangential companion
$A$ acting on perturbations along the chain and a normal companion $B$
acting on perturbations across it. Two results then connect the
companions to the regularity of the nonlinear scheme. If the data are
straightened and both $A$ and $B$ are almost $C^{1,\alpha}$ schemes, the
nonlinear scheme is almost $C^{1,\alpha}$. If in addition the scheme is
locally linear, meaning $A = B$, and the common companion is almost
$C^{2,\alpha}$, then the nonlinear scheme is almost $C^{2,\beta}$ for an
explicit $\beta$. These are the only routes to first-order and
second-order regularity in the framework. Its authors close by
conjecturing that the hypotheses can be weakened to conditions on
$B$ alone, so that the regularity of a geometric scheme would be governed
by its normal companion while the tangential companion is irrelevant. The
present paper proves that conjecture for the operator $\Tf$, in the
geometric formulation made precise in Section~\ref{sec:main}, and the
proof explains why nothing weaker would have sufficed.

Within the framework just recalled, the block linearisation of
Theorem~\ref{thm:block} has two immediate consequences. The first is that
$\Tf$ satisfies every structural axiom.

\begin{corollary}[Membership in the framework]
\label{cor:glue}
The operator $\Tf$ is a GLUE scheme with locality parameter $m = 5$,
spread $n = 11$, offset $\tau = 0$, and smooth rules near the standard
chain. Similarity equivariance holds because every ingredient of the
construction is invariant under rigid motions and scales correctly under
dilations, and it is confirmed to $7.7 \times 10^{-16}$ over random
similarities including reflections. Locality and uniformity hold by the
six-point support of the filtered rule. Smoothness of the rules in a
neighbourhood of $\mathbf e$ holds because the shooting solution depends
smoothly on the stencil by Theorem~\ref{thm:solvable}, so the rules are in fact infinitely
differentiable there and the H\"older requirement on the derivative is met
with exponent one. Equilinearity with $\tau = 0$ is the equal-spacing case
of Theorem~\ref{thm:restriction}.
\end{corollary}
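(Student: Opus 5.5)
The plan is to check the four GLUE axioms of the framework of~\cite{ewald2015holder} one at a time, relying throughout on results already proved. Before that, I fix the counting parameters. By the support analysis of Section~\ref{sec:finalop}, the filtered insertion on edge $(p_j,p_{j+1})$ uses the six points $p_{j-2},\dots,p_{j+3}$, and the interpolatory rule $\Tf(p)_{2j}=p_j$ trivially lies in the same window. So both refinement rules act on windows of $m+1=6$ consecutive points, which gives $m=5$. The spread $n$ is then the window length that the framework attaches to a given $m$ (for these rules, $n=2m+1=11$), which is the number of points on which the relative distortion \eqref{eq:distortion} has to be measured for the contraction and straightening arguments to close. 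I would quote the framework's convention here rather than re-derive it. Uniformity in the window index holds because the same construction is applied on every edge.

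For similarity equivariance, I would argue ingredient by ingredient. The Menger estimate \eqref{eq:menger} is invariant under rotations and translations, changes sign under reflections, and scales as $\lambda^{-1}$ under a dilation by $\lambda$. The filter \eqref{eq:filter} is linear, so it inherits all of this. In the adapted profile \eqref{eq:adapted} with $K=0$, a dilation sends $s\mapsto\lambda s$ and $\kappa\mapsto\kappa/\lambda$, and the equation $\eta''=-\bar\kappa^2\eta$ is invariant under this rescaling. The Frenet integration and the shooting problem \eqref{eq:shoot} are intrinsic, so the local solution branch is carried to the corresponding branch. Arc length scales by $\lambda$, so the half-arc-length point is mapped correctly. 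Under a reflection the orientation convention flips the sign of the curvature, and the reflected curve solves the reflected problem. Uniqueness of the local branch from Theorem~\ref{thm:solvable} then identifies the two insertions. The safeguard is formulated through the same intrinsic neighbourhood, so it commutes with similarities as well. The numerical confirmation is reported only as a check.

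For smoothness near $\mathbf e$: at $\mathbf e$ all curvature data vanish. Each Menger estimate is a $C^\infty$ function of its three points wherever they are distinct, and the filter is linear. Theorem~\ref{thm:solvable}, with $d=1>0$ in the plane, gives $(\phi^\ast,L^\ast)$ as a smooth function of the data near zero. The map from $(\phi^\ast,L^\ast)$ and the data to the point $\gamma_{\phi^\ast,L^\ast}(L^\ast/2)$ is smooth by smooth dependence of ODE solutions on parameters. So on a neighbourhood of $\mathbf e$ the rule is a composition of $C^\infty$ maps, and its derivative is locally Lipschitz, that is, H\"older with exponent one.

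Equilinearity with $\tau=0$ follows from the equal-spacing case of Theorem~\ref{thm:restriction}. Equidistant collinear data go to their own points together with the midpoints, which is the equidistant chain at half spacing with no offset.

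The step I expect to be most delicate is the safeguard. I must make sure it does not activate in a neighbourhood of $\mathbf e$, since otherwise the rule could be piecewise defined and fail to be smooth there. I would settle this by shrinking the neighbourhood until the filtered data lie in the solvability neighbourhood of Theorem~\ref{thm:solvable}. Newton's method started from the geodesic guess then converges to the unique local solution inside the admissible neighbourhood, because the Jacobian is nonsingular there and the initial error is small. A second point to watch is that the dilation invariance of the adapted profile uses $K=0$; this is fine because we work in the plane.
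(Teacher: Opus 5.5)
Your proposal is correct and follows essentially the same route as the paper, which justifies each axiom in one line: similarity invariance of every ingredient, the six-point support, smooth dependence of the shooting solution via Theorem~\ref{thm:solvable}, and the equal-spacing case of Theorem~\ref{thm:restriction}. Your explicit check that $\eta''=-\bar\kappa^2\eta$ is dilation invariant, and your argument that the safeguard cannot activate near $\mathbf e$ (so the rule is not piecewise defined there), usefully fill in details the paper leaves implicit.
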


The second consequence closes the direct route through the asymptotic
theorems of the framework.

\begin{corollary}[The companion pair]
\label{cor:companions}
The tangential companion of $\Tf$ is $A = \Smid$ and the normal companion
is $B = \Sdd$. Since the limit functions of $\Smid$ are the piecewise
linear interpolants of the data, the tangential companion is Lipschitz but
not $C^1$, so it is not an almost $C^{1,\alpha}$ scheme for any positive
$\alpha$, and the first-order theorem of~\cite{ewald2015holder} does not
apply to $\Tf$. Since $A \neq B$, the scheme is not locally linear, and
the second-order theorem of~\cite{ewald2015holder} does not apply either.
\end{corollary}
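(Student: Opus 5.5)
The statement to prove is Corollary~\ref{cor:companions}. The plan is to read off the companions from Theorem~\ref{thm:block} and then check each hypothesis of the two asymptotic theorems of~\cite{ewald2015holder} against them.

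\textbf{Identifying the companions.} In the framework, $A$ and $B$ are defined as the tangential and normal diagonal blocks of the derivative of the refinement rules at $\mathbf e$. Theorem~\ref{thm:block} computes this derivative for $\Tf$ as $\Smid \oplus \Sdd$ with vanishing cross terms. The tangential block is exact by Theorem~\ref{thm:restriction}, and the normal block is the DD6 row of Theorem~\ref{thm:filter}. Hence $A = \Smid$ and $B = \Sdd$ directly. The only bookkeeping is that the framework's even rule is the identity because $\tau = 0$ (Corollary~\ref{cor:glue}), which matches interpolation in both blocks.

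\textbf{$A$ is not almost $C^{1,\alpha}$.} I would exhibit the limit of $\Smid$ applied to a delta sequence. It is the hat function, which is Lipschitz with a derivative that jumps at the integers, so it is not $C^1$. To rule out any positive $\alpha$, I would argue through the symbol. The mask of $\Smid$ is $\tfrac12(1+z)^2 z^{-1}$, and its derived scheme for differences has mask $\tfrac12(1+z)$. That derived scheme is piecewise-constant subdivision, whose limit functions are step functions. The first difference scheme is therefore not convergent, and no contraction of second differences holds. More simply, a scheme whose basic limit function fails to be $C^1$ cannot be $C^{1,\alpha}$ for any $\alpha>0$, nor ``almost'' so. The reason is that almost $C^{1,\alpha}$ means $C^{1,\alpha-\epsilon}$ for every $\epsilon$, and for $\epsilon<\alpha$ this still requires $C^1$. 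Since the first-order theorem requires both companions to be almost $C^{1,\alpha}$, it does not apply to $\Tf$.

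\textbf{$A \neq B$.} I would compare the masks. $\Smid$ inserts $\tfrac12(y_j+y_{j+1})$, while $\Sdd$ has nonzero outer weight $3/256$ at $y_{j-2}$. The second-order theorem presupposes local linearity, $A=B$, so it fails at that hypothesis.

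I expect no real obstacle. The only point needing care is making the definition of ``almost $C^{1,\alpha}$'' precise enough that the hat-function counterexample visibly excludes every $\alpha>0$.
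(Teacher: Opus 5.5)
Your proposal is correct and follows the paper's own justification: you read $A=\Smid$ and $B=\Sdd$ off Theorem~\ref{thm:block}, rule out every almost $C^{1,\alpha}$ with $\alpha>0$ using the piecewise linear (hat-function) limits of $\Smid$, and show $A\neq B$ from the masks. Your optional symbol aside has a small normalisation slip, which your hat-function argument does not depend on. Piecewise-constant subdivision is the divided-difference scheme with mask $1+z$, not $\tfrac12(1+z)$. Also, plain second differences under $\Smid$ do contract, by exactly $\tfrac12$ as in Lemma~\ref{lem:linear}, and it is this borderline rate that fails to give $C^1$.
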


At this point one route through the framework remains open, namely the
direct certification of straightening by a contraction estimate for the
relative distortion, which would at least deliver convergence. The next
subsection proves that this route is closed as well, and that the closure is
not a technical artefact but a structural property of half-arc-length
insertion.

\subsection{The Failure of Straightening}
\label{sec:straightfail}

The relative distortion \eqref{eq:distortion} vanishes only on uniform
collinear chains, so it penalises two quite different departures from
straightness at once. A chain can be distorted because it bends, and it
can be distorted because its points are unevenly spaced along a perfectly
straight path. The scheme $\Tf$ removes the first kind of distortion and
deliberately preserves the second, and this section makes both halves of
that sentence precise.

\begin{proposition}[Neutrality of the tangential dynamics]
\label{prop:neutral}
Let $p$ be a collinear polygon with consecutive edge lengths $e_j > 0$.
Then $\Tf p$ is collinear, each edge splits into two equal halves, and the
multiset of ratios of consecutive edge lengths of $\Tf p$ consists of the
value one together with all the ratios of $p$. In particular the maximum
consecutive-edge ratio is invariant under refinement, and if $p$ is
nonuniform then $\kERS(T_{\mathrm{CV}}^{\ell}\, p)$ is bounded away from
zero uniformly in $\ell$, even though every polygon in the sequence lies
on the same straight line.
\end{proposition}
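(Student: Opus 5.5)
The plan is to apply Theorem~\ref{thm:restriction} edge by edge and then read off the consequences for edge ratios and for the relative distortion \eqref{eq:distortion}.

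\textbf{Step 1: collinearity and halving.} Since $p$ is collinear, every six-point stencil of $\Tf$ lies on the common line. Theorem~\ref{thm:restriction} then says the insertion on the edge $(p_j,p_{j+1})$ is exactly $\tfrac12(p_j+p_{j+1})$. Since the scheme is interpolatory, $\Tf p$ lies on the same line, and the edge of length $e_j$ is replaced by two edges of length $e_j/2$. Any dependence on the safeguard disappears, because raw and filtered data are both zero on collinear stencils and give the same midpoint.

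\textbf{Step 2: the ratio multiset.} The refined edge sequence is $(\dots, e_j/2, e_j/2, e_{j+1}/2, e_{j+1}/2,\dots)$. Its consecutive ratios alternate between $1$ (within a split edge) and $e_{j+1}/e_j$ (across an old vertex). So the multiset of ratios is a copy of the value one for each edge, together with all old ratios. Both $e_{j+1}/e_j$ and its reciprocal, and hence the maximum ratio, are preserved. Since $\max_j e_{j+1}/e_j\ge1$, adding ones leaves the maximum unchanged, and induction on $\ell$ gives invariance for all levels.

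\textbf{Step 3: distortion bounded below.} This is the step needing care. Suppose $p$ is nonuniform, so that some ratio $r=e_{j+1}/e_j\neq1$. The plan is to fix, at each level $\ell$, a window of $n$ consecutive points of $T_{\mathrm{CV}}^\ell p$ containing an old vertex where the ratio $r$ occurs. For $\ell$ large, every window of $n=11$ points lies inside at most two old edges, so its points are $a+ t_i u$ on the line. The increments are equal to $e_j/2^\ell$ on one side of the vertex and $e_{j+1}/2^\ell$ on the other, all along a unit vector $u$.

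Because $\kERS$ is similarity invariant, we may rescale by $2^\ell/e_j$. The window then becomes a fixed chain independent of $\ell$: up to $k$ unit steps followed by steps of size $r$. There are only finitely many positions $k$ of the break inside the window, and one should choose windows containing a break at an interior point so that a second difference is present. Each such chain has nonzero second difference, of norm $|r-1|$. Its best affine fit has increment bounded by $\max(1,r)$. Hence $\kERS\ge |r-1|/\max(1,r)>0$ uniformly in $\ell$.

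The obstacle is making the affine fit bound precise; I would note that the fitted increment is a weighted average of the step lengths, and is therefore at most $\max(1,r)$. For small $\ell$ no lower bound is needed beyond positivity, because the claimed uniform bound comes from the tail, and the finitely many early levels each have positive distortion by nonuniformity.
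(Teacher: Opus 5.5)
Your proposal is correct and follows the paper's proof. Theorem~\ref{thm:restriction} gives collinearity and exact bisection, the refined ratios are $1$ within parent edges and $e_{j+1}/e_j$ across junctions, and a window centred at a nonuniform junction has second difference $\abs{e_{j+1}-e_j}/2^\ell$ against a fitted increment of at most $\max(e_j,e_{j+1})/2^\ell$, which gives the uniform bound $\abs{e_{j+1}-e_j}/\max(e_j,e_{j+1})$. Two of your steps go beyond what the paper writes out: the least-squares slope is a positively weighted average of the steps, and the finitely many early levels, where the window can overlap further parent edges, need separate treatment; the paper only asserts the bound at every level without either justification.
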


\begin{proof}
Collinearity and exact edge bisection are the content of
Theorem~\ref{thm:restriction}. The two children of the edge $e_j$ have
length $e_j/2$ each, so the consecutive ratios of the refined polygon are
$1$ within each parent edge and $e_{j+1}/e_j$ across each parent junction,
which proves the multiset statement and the invariance of the maximum
ratio $r = \max_j \max(e_{j+1}/e_j,\ e_j/e_{j+1})$. For the distortion
bound, fix a junction realising $r$ and consider at level $\ell$ the
window of $n$ points centred there. The second difference at the junction
has magnitude $\abs{e_{j+1} - e_j}/2^\ell$ while the fitted increment of
the window is at most the largest local edge $\max(e_j, e_{j+1})/2^\ell$,
so
\[
  \kERS\bigl(T_{\mathrm{CV}}^{\ell}\, p\bigr)
  \;\ge\;
  \frac{\abs{e_{j+1} - e_j}}{\max(e_j, e_{j+1})}
  \;=\; 1 - \frac{1}{r}
  \;>\; 0
\]
for every $\ell$, which is the claimed uniform lower bound.
\end{proof}

The proposition shows that straightening fails already on data whose
limit is a straight line, but one might object that collinear nonuniform
data are artificial. The next theorem removes the objection with the most
natural data the scheme possesses, and it is the decisive example of the
paper.

\begin{theorem}[The circle counterexample]
\label{thm:circle}
Let $p^{(0)}$ sample a circle of radius $R$ at angles with nonuniform
gaps $g_j > 0$, chosen dense enough that every edge lies in the local
solvability branch. Then every refined polygon
$p^{(\ell)} = T_{\mathrm{CV}}^{\ell}\, p^{(0)}$ samples the same circle,
at angular gaps $g_j / 2^{\ell}$ inherited from the initial pattern, and
the polygons converge to the exact circle in Hausdorff distance. If the
gaps are not all equal, there nevertheless exists $c > 0$, depending only
on the gap pattern, with
\[
  \kERS\bigl(p^{(\ell)}\bigr) \;\ge\; c
  \qquad\text{for every } \ell \ge 0 ,
\]
so the sequence is not straightened. Consequently the convergence,
$C^{1,\alpha}$, and $C^{2,\alpha}$ theorems of~\cite{ewald2015holder} all
have unsatisfiable hypotheses for $\Tf$, although the scheme converges on
these data to the smoothest curve there is.
\end{theorem}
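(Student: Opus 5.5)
The plan is to argue by induction on $\ell$, using Proposition~\ref{prop:repro} at each level, then deduce Hausdorff convergence from the shrinking gaps, and finally bound the relative distortion from below by a direct computation of second differences on a nonuniformly sampled circle. The ingredients are the same as in Proposition~\ref{prop:neutral}, with curvature added as a small correction.

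\textbf{Step 1 (invariance of the circle and of the gap pattern).} Suppose $p^{(\ell)}$ samples the circle $C$ at gaps $g_j/2^\ell$. Every stencil of six consecutive points lies on $C$, so each raw Menger estimate equals $1/R$ exactly. The prefilter preserves constants, so every edge receives the data $(1/R,1/R)$. By Proposition~\ref{prop:repro}, the inserted point lies on $C$, at half arc length of the reconstructed arc. That arc is the arc of $C$ between the endpoints, so the inserted point lies at the angular midpoint and splits the gap $g_j/2^\ell$ into two halves.

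Two hypotheses of Proposition~\ref{prop:repro} must be checked at every level: the data must lie in the solvability neighbourhood, and the arc must lie in the local branch. I would handle these by scale equivariance. The shooting problem for an edge of chord $d$ and curvature $1/R$ is a rescaling of the problem with chord $1$ and curvature $d/R$. Refinement only decreases $d/R$, so density at level $0$ persists at every level. This step needs care, because Theorem~\ref{thm:solvable} gives a neighbourhood for fixed $d$. I expect it to be the main technical obstacle, and I would resolve it by stating the solvability in the normalised variables.

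Interpolation then gives the inclusion $p^{(\ell)} \subset C$. The maximal chord is at most $R\max_j g_j/2^\ell$, so the polygons converge to $C$ in Hausdorff distance.

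\textbf{Step 2 (lower bound on $\kERS$).} Let $r>1$ be the maximal ratio of consecutive gaps, attained at some junction where gaps $a=g_j$ and $b=g_{j+1}$ meet. It persists at every level, exactly as in Proposition~\ref{prop:neutral}. Take the window of $n$ points centred at this junction at level $\ell$, with local scale $\delta=2^{-\ell}$.

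By the triangle inequality, the second difference at the junction has norm at least the tangential component of $q_{+}-2q_0+q_{-}$. That component equals $2R\bigl(\sin(b\delta/2)\cos(b\delta/2)-\sin(a\delta/2)\cos(a\delta/2)\bigr)$, which is $R|b-a|\delta\,(1+O(\delta^2))$.

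For the denominator, every point of the window lies within arc distance $O(n\max g\,\delta)$ of $q_0$. The best affine fit is a least-squares fit, so its increment is bounded by a fixed multiple $C_n R\max_j g_j\,\delta$, with $C_n$ depending only on $n$. For example, the increment is a fixed linear combination of the chain points with coefficients summing to zero.

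Dividing gives $\kERS(p^{(\ell)}) \ge |b-a|/(C_n\max g)\,(1-O(\delta^2))$. For the finitely many small $\ell$ where the correction term matters, positivity follows anyway, since a nonuniform circle sample is not uniform and collinear, so $\kERS>0$. Taking the minimum of these finitely many values and the asymptotic bound yields a constant $c>0$ that depends only on the gap pattern.

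\textbf{Step 3 (consequence).} Every theorem of \cite{ewald2015holder} requires $\kERS(p^{(\ell)})\to0$, so none of them applies to these data. Yet Step 1 shows that the scheme converges on them to $C$ itself.
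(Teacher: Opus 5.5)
Your proposal is correct and follows essentially the same route as the paper. You use induction via Proposition~\ref{prop:repro} to get bisection of the angular gaps, and Hausdorff convergence from the shrinking gaps. You then bound $\kERS$ from below at a junction with unequal gaps, comparing the tangential part $R\abs{b-a}\delta(1+O(\delta^2))$ of the second difference with an $O(R\max_j g_j\,\delta)$ bound on the fitted increment, and absorb the finitely many initial levels into the constant.

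Two differences from the paper are harmless. Your scale-equivariance remark, which carries the solvability hypothesis to every level, supplies a detail the paper leaves implicit. Your $C_n$ bound on the fitted increment is only a weaker form of the paper's observation that the increment is bounded by the largest chord in the window.
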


\begin{proof}
By the reproduction results of Proposition~\ref{prop:repro}, on circle
data the estimator returns the exact curvature at every vertex, the
prefilter preserves the constant sequence, the adapted profile is exactly
constant, and the reconstructed segment on each edge is the shorter arc of
the circle itself. The half-arc-length point of a circular arc bisects its
central angle, so the refined polygon samples the same circle at the
halved gaps, and induction gives the gap pattern $g_j/2^\ell$ at level
$\ell$. Hausdorff convergence to the circle is then immediate, since the
maximal gap tends to zero. For the distortion bound, fix a junction where
$g_{j+1} \neq g_j$ and consider the window of $n$ points centred there at
level $\ell$. Writing $g = g_j/2^\ell$ and $g' = g_{j+1}/2^\ell$ for the
two local gaps, the chord vectors have lengths $2R\sin(g/2)$ and
$2R\sin(g'/2)$, so the second difference at the junction has tangential
magnitude $R\,\abs{g' - g}\,(1 + O(4^{-\ell}))$ while every chord in the
window, and hence the fitted increment, has magnitude at most
$R\,g_{\max}\,(1 + O(4^{-\ell}))$ with
$g_{\max} = \max_i g_i / 2^{\ell}$ taken over the window. The ratio is
therefore bounded below by
$\abs{g_{j+1} - g_j} / (2\max_i g_i) > 0$ for all sufficiently large
$\ell$, and adjusting the constant covers the finitely many initial
levels. The final sentence follows because every theorem of the framework
assumes straightening of the data to which it is applied.
\end{proof}

In the reported experiment, with twelve points and a sinusoidally
modulated gap pattern, the measured relative distortion settles at
$0.9427$ from level four onwards while the radial error of every refined
point is $1.1 \times 10^{-16}$. On a nonuniformly sampled wavy polygon
the same freezing occurs at the value $1.887$, and from level five the
frozen part is purely tangential, with the tangential component ratio
equal to $1.0000$ to four decimal places while the normal component
contracts by the factor $0.5$ per level. Even on the most favourable data
we tested, a wavy curve sampled uniformly in arc length, the tangential
component reaches a frozen floor of $8.8 \times 10^{-4}$ injected by the
curvature-induced arc redistribution of the coarse levels, and its
per-level ratio climbs through $0.60$, $0.86$, $0.96$, and $0.99$ between
levels seven and ten, so that straightening halts asymptotically there
too. Figure~\ref{fig:straightening} displays all three experiments, and
Table~\ref{tab:gamma} reports the corresponding one-step contraction
ratios of the relative distortion on randomly perturbed normalised
chains, split by the subspace of the perturbation. On the tangential
subspace the worst ratio exceeds one at every tested amplitude, so no
finite-step contraction estimate of the framework can hold, while on the
normal subspace the worst ratio is below $0.961$ throughout. The failure
is confined to, and fully explained by, the tangential block of
Theorem~\ref{thm:block}, whose second-difference decay factor equals the
edge-shrink rate exactly.

\begin{figure}[htbp]
  \centering
  \includegraphics[width=\linewidth]{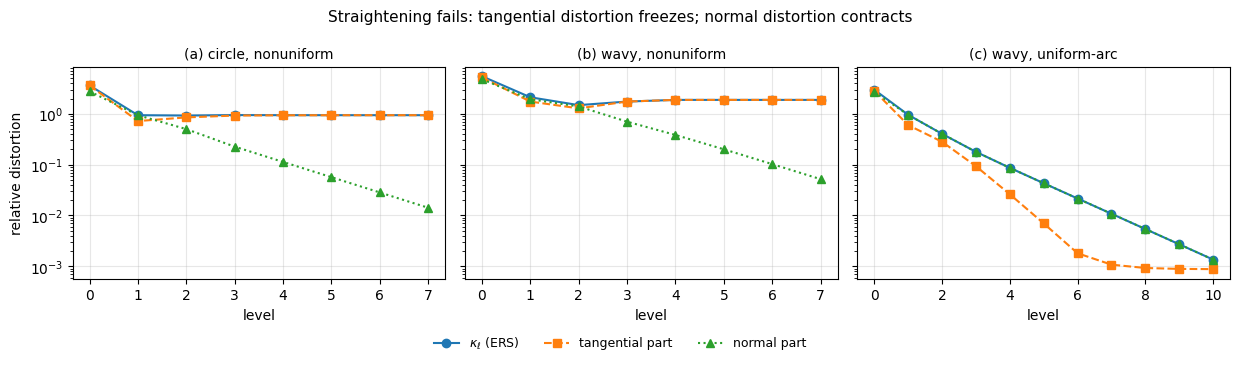}
  \caption{Relative distortion and its tangential and normal components
  under refinement by $\Tf$ on a nonuniformly sampled circle, a
  nonuniformly sampled wavy polygon, and the same wavy curve sampled
  uniformly in arc length. The distortion freezes at a positive value in
  the first two experiments and hits the frozen tangential floor in the
  third, while the normal component contracts at the rate one half
  throughout. Generated by the accompanying notebook.}
  \label{fig:straightening}
\end{figure}

\begin{table}[htbp]
  \centering\small
  \begin{tabular}{cccc}
    \toprule
    perturbation amplitude & mixed & tangential only & normal only \\
    \midrule
    $0.02$ & $1.0029$ & $1.0032$ & $0.9451$ \\
    $0.05$ & $1.0152$ & $1.0050$ & $0.9605$ \\
    $0.10$ & $1.0101$ & $1.0148$ & $0.9551$ \\
    \bottomrule
  \end{tabular}
  \caption{Worst one-step ratio of the relative distortion over sixty
  random perturbations of the standard chain per cell, for both refinement
  phases, split by the subspace of the perturbation. Contraction fails on
  the tangential subspace and holds on the normal subspace. Generated by
  the accompanying notebook.}
  \label{tab:gamma}
\end{table}

It is instructive to compare this situation with the worked example
of~\cite{ewald2015holder}. The circle-preserving scheme of Sabin and
Dodgson passes the straightening test of that paper because its insertion
ratio was chosen to equidistribute the points actively, so its tangential
dynamics contract nonuniformity. Half-arc-length insertion is the one
design decision of $\Tf$ that deliberately declines to do this, because
equidistribution in the chord parameter is a parametric convenience with
no geometric content, and the price is that a parametric Lyapunov
function cannot certify the scheme. The remedy, to which we now turn, is
to measure only the geometric part of the distortion.

\section{Normal Distortion and the Invariant Admissible Class}
\label{sec:normal}

The diagnosis of the previous section dictates the design of the repair.
The tangential part of the relative distortion is invariant under the
scheme and carries no geometric information, so the Lyapunov function of
the convergence theory must be the normal part alone, and the tangential
part must be handled by a boundedness argument rather than a decay
argument. This section defines the two quantities, establishes the exact
contraction constants of the linearised normal dynamics, and proves that
an admissible class of polygons is mapped into itself by the scheme.

\begin{definition}[Tangential and normal distortion]
\label{def:normaldist}
Let $q$ be a chain of $n = 11$ points, let $\Pi q$ be its best affine fit
in the window index, let $b(q)$ be the fitted increment, and let
$u = b(q)/\norm{b(q)}$ when $b(q) \neq 0$. Decomposing each second
difference $d_i$ of $q$ as
$d_i = \langle d_i, u\rangle u + d_i^{\perp}$, we set
\begin{equation}
  \kT(q) = \frac{\max_i \abs{\langle d_i, u\rangle}}{\norm{b(q)}},
  \qquad
  \kN(q) = \frac{\max_i \norm{d_i^{\perp}}}{\norm{b(q)}},
  \label{eq:kn}
\end{equation}
and for a closed polygon $p$ we write $\kT(p)$ and $\kN(p)$ for the maxima
of these quantities over all cyclic windows of $n$ consecutive points.
Both quantities are invariant under similarities, they satisfy
$\max(\kT, \kN) \le \kERS \le \kT + \kN$, and $\kN$ vanishes exactly on
collinear chains, uniform or not.
\end{definition}

The normal distortion is the geometric part of \eqref{eq:distortion}. It
ignores the spacing of the points along the fitted line and measures only
their departure from it, which is exactly the quantity the tangentially
neutral scheme can be expected to contract. At the linearised level the
contraction constants can be computed exactly, and it is a pleasant
feature of the analysis that they are rational numbers.

\begin{lemma}[Exact contraction constants of the linearised dynamics]
\label{lem:linear}
Let $\Delta^k$ denote the $k$-th difference operator on scalar sequences.
For the six-point scheme, one step and two steps of refinement satisfy the
sharp sup-norm bounds
\begin{equation}
  \norm{\Delta^2 \Sdd\, \nu}_\infty \le \tfrac12 \norm{\Delta^2
  \nu}_\infty,
  \qquad
  \norm{\Delta^2 \Sdd^2\, \nu}_\infty \le \tfrac{2697}{16384}
  \norm{\Delta^2 \nu}_\infty,
  \label{eq:d2const}
\end{equation}
\begin{equation}
  \norm{\Delta^3 \Sdd\, \nu}_\infty \le \tfrac14 \norm{\Delta^3
  \nu}_\infty,
  \qquad
  \norm{\Delta^3 \Sdd^2\, \nu}_\infty \le \tfrac{91}{2048}
  \norm{\Delta^3 \nu}_\infty,
  \label{eq:d3const}
\end{equation}
while for midpoint subdivision the one-step second-difference bound holds
with the sharp constant $\tfrac12$ and the refined second differences
interleave zeros with the halved originals. Since fitted increments halve
per step on perturbations of the standard chain up to quadratic error, the
linearised two-step ratio of the normal distortion under
$\Smid \oplus \Sdd$ is at most $2697/4096 \approx 0.658$, the linearised
two-step ratio of third differences scaled by squared spacing is at most
$91/128 \approx 0.711$, and the corresponding tangential ratios equal one
exactly.
\end{lemma}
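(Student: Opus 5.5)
The plan is to compute the constants through the standard derived-scheme machinery for linear stationary schemes. The mask of $\Sdd$ is $a(z) = \frac{(1+z)^6}{256\,z^5}\,b(z)$ with $b(z) = 3z^{-2}\,(1 - 6z + z^2)$ up to normalisation. Its symbol therefore contains the factor $(1+z)^6$, which is one more than the factor $(1+z)^4$ that polynomial reproduction of degree five would guarantee.

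Since $\Delta^k S = 2^{-k}\,S^{[k]}\Delta^k$, where $S^{[k]}$ has symbol $2^k a(z)/(1+z)^k$, we have $\norm{\Delta^k \Sdd^m \nu}_\infty \le 2^{-km}\norm{(S^{[k]})^m}_\infty \norm{\Delta^k\nu}_\infty$. Here the operator norm is the maximum over the $2^m$ residue classes of the $\ell^1$ norm of the corresponding subsequence of the mask of $(S^{[k]})^m$. For $k=2$ and $k=3$, and for $m=1,2$, I will form these iterated masks explicitly as Laurent polynomials via $a(z)a(z^2)$ divided by $(1+z)^k(1+z^2)^k$. I will then take the maximal residue-class $\ell^1$ sum and read off $\tfrac12$, $\tfrac{2697}{16384}$, $\tfrac14$ and $\tfrac{91}{2048}$. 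This is exact rational arithmetic, best delegated to the notebook.

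Sharpness follows by exhibiting, for each case, a sequence $\Delta^k\nu$ of alternating signs matching the sign pattern of the maximising residue class. Such a sequence is realisable because $\Delta^k$ maps onto finitely supported data modulo polynomials on a long window. The operator norm then equals the residue-class $\ell^1$ sum. I expect this bookkeeping, correctly pairing residue classes with the interpolatory structure (even children have vanishing $k$-th differences of the original at the right shifts), to be the fiddliest step. For $m=1$ the even rows of $S^{[2]}$ are trivial, so the constant $\tfrac12$ comes from the odd row. I will double-check that the odd-row $\ell^1$ sum of $S^{[2]}$ is exactly $1$, which reflects the positivity structure of the DD6 second derived mask.

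For midpoint subdivision, the refined sequence is $\nu_j$ at even indices and $\tfrac12(\nu_j+\nu_{j+1})$ at odd indices. A direct computation gives refined second differences equal to $0$ at odd centres and $\tfrac12\Delta^2\nu_j$ at even centres, which proves the interleaving claim and the sharp constant $\tfrac12$. Applied twice, this gives a two-step factor of exactly $\tfrac14$ on second differences.

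Finally, I will convert these bounds into distortion ratios. Near the standard chain, the fitted increment of a refined window equals half the parent increment up to $O(\varepsilon^2)$ by Theorem~\ref{thm:block}, so two steps divide $\norm{b}$ by $4$. The normal distortion $\kN$ is, to first order, $\norm{\Delta^2\nu}_\infty/\norm{b}$, since $d_i^{\perp}$ is the normal second difference, the cross blocks vanish, and the rotation of $u$ is first order. It therefore scales by $4\cdot\tfrac{2697}{16384} = \tfrac{2697}{4096}$. Third differences scaled by $\norm{b}^2$ pick up a factor $16$, giving $16\cdot\tfrac{91}{2048} = \tfrac{91}{128}$. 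On the tangential side, $\kT$ involves $\Delta^2\tau$, which is multiplied by exactly $\tfrac14$ over two midpoint steps, while $\norm{b}$ is divided by $4$. The ratio is therefore exactly one, with equality attained at the junction entries.
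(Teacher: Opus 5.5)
Your plan is the paper's proof. The paper also obtains the constants as sup norms of the difference-level derived operators of the DD6 mask, computed in exact rational arithmetic by summing absolute coefficients over each refinement phase. It handles midpoint subdivision by the direct interleaving computation, and it converts to distortion ratios by dividing by the two-step decays $\tfrac14$ and $\tfrac1{16}$. Carrying out your residue-class computation does return $\tfrac12$, $\tfrac{2697}{16384}$, $\tfrac14$ and $\tfrac{91}{2048}$, with sharpness from sign-matched data as you describe.

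However, two of your concrete side claims are false, and the check you announce would fail if taken literally.

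\emph{The factorisation of the symbol.} The mask has eleven taps, so the cofactor of $(1+z)^6$ is a quartic, not a quadratic: $a(z)=\frac{(1+z)^6}{256\,z^5}\,(3-18z+38z^2-18z^3+3z^4)$. Moreover, $(1+z)^6$ is exactly the factor that reproduction of degree five guarantees, not one more than it.

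\emph{The rows of $S^{[2]}$.} Neither row of $S^{[2]}$ is trivial; your remark about even children having vanishing differences is mistaken for the same reason. Nor is the derived mask positive. With your normalisation, its two phases are $(3,-16,90,-16,3)/64$, centred at the old vertices, and $(-6,38,38,-6)/64$, centred at the inserted points. Their $\ell^1$ sums are $2$ and $\tfrac{11}{8}$. The sharp constant $\tfrac12=\tfrac14\cdot 2$ therefore comes from the old-vertex phase. A row sum of $1$ would have produced $\tfrac14$, not $\tfrac12$.

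Neither error is load-bearing, since you read the constants off the full maximum over phases. Still, you should correct both before relying on the computation as a cross-check.
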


\begin{proof}
The constants are the sup norms of the difference-level derived operators
of the six-point and midpoint insertion masks, and they are computed
exactly in rational
arithmetic in the accompanying notebook by expressing the $k$-th
differences of the refined impulse responses in terms of the $k$-th
differences of the data and summing absolute coefficients over each
refinement phase. The one-step constants $\tfrac12$ and $\tfrac14$ are
attained, which is why a one-step contraction of the normal distortion
fails and the two-step constants carry the argument. The midpoint
statement is the elementary computation already used in
Proposition~\ref{prop:neutral}, and the ratio statements follow upon
dividing by the two-step decay $\tfrac14$ of the fitted increment for the
second differences and by the two-step decay $\tfrac{1}{16}$ of the
squared spacing for the third differences.
\end{proof}

The lemma explains the numbers of Table~\ref{tab:gamma}. The normal
one-step ratios cluster just below one because the one-step constant
$\tfrac12$ is sharp against the halving of the increment, and strict
contraction appears at two steps, where the measured per-level rate of the
normal distortion in Figure~\ref{fig:straightening} is $0.5$, comfortably
inside the guaranteed $\sqrt{2697/4096} \approx 0.81$. We can now define
the class of polygons on which the nonlinear scheme inherits these
constants.

\begin{definition}[Admissible class]
\label{def:admissible}
Let $C_0 \ge 1$, $M > 0$, $\Lambda > 0$, and $\delta > 0$. A closed polygon
$p$ with maximum edge length $h = h(p)$ belongs to the admissible class
$\mathcal{A}_h(C_0, M, \Lambda, \delta)$ if its edge lengths lie between
$h/C_0$ and $h$, if its filtered curvature estimates satisfy
$\abs{\hat\kappa_j} \le M$ and
$\abs{\hat\kappa_{j+1} - \hat\kappa_j} \le \Lambda h$, if
$\kN(p) \le \delta$, and if $h$ is small enough that every edge lies in
the local solvability branch of Theorem~\ref{thm:solvable} with
$\abs{\bar\kappa} L$ below the first resonance of the profile
\eqref{eq:adapted}.
\end{definition}

The class prescribes quasi-uniformity of the edges, boundedness and
H\"older-type control of the curvature data, and small normal distortion,
and it deliberately places no constraint on the tangential distortion
beyond what quasi-uniformity implies, since Proposition~\ref{prop:neutral}
shows that no decaying constraint could be propagated. The following
theorem is the analytic core of the paper.

\begin{theorem}[Invariance and contraction]
\label{thm:invariance}
There exist constants $h_0 > 0$, $\delta_0 > 0$, and $C > 0$, depending
only on $C_0$, $M$, and $\Lambda$, with the following properties for every
$p \in \mathcal{A}_h(C_0, M, \Lambda, \delta)$ with $h \le h_0$ and
$\delta \le \delta_0$. First, every Newton insertion converges within the
local branch, so the safeguard of the underlying construction never
activates, and the refined polygon satisfies
\begin{equation}
  \abs{\,\norm{e^{\mathrm{child}}} - \tfrac12\norm{e^{\mathrm{parent}}}\,}
  \le C h^3
  \label{eq:halving}
\end{equation}
for both children of every parent edge, so that
$\Tf p \in \mathcal{A}_{h'}(C_0', M', \Lambda', \delta')$ with
$h' \le (\tfrac12 + C h^2)\, h$ and constants
$C_0' = C_0(1 + Ch^2)$, $M' = M + C h$, $\Lambda' = \Lambda + C h$, and
$\delta' = (1 + C h)\,\delta + C h^2$.
Second, the curvature data contract in the sense that
\begin{equation}
  \max_j \abs{\hat\kappa_{j+1}(T^2_{\mathrm{CV}} p) -
  \hat\kappa_j(T^2_{\mathrm{CV}} p)}
  \;\le\;
  \tfrac{91}{128}\,\max_j \abs{\hat\kappa_{j+1}(p) - \hat\kappa_j(p)}
  + C h^2 .
  \label{eq:kcontract}
\end{equation}
Third, the normal distortion contracts in the sense that
\begin{equation}
  \kN\bigl(T^2_{\mathrm{CV}} p\bigr)
  \;\le\;
  \Bigl(\tfrac{2697}{4096} + C h\Bigr)\,\kN(p) + C h^2 .
  \label{eq:ncontract}
\end{equation}
In particular, for $h_0$ small enough the two-step ratios in
\eqref{eq:kcontract} and \eqref{eq:ncontract} are bounded by constants
strictly below one, and iterating the first statement shows that the whole
refinement sequence $T^\ell_{\mathrm{CV}}\, p$ remains admissible with
$h(T^\ell_{\mathrm{CV}}\, p) \le C\, 2^{-\ell} h$.
\end{theorem}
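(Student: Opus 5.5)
The plan is to treat one step of $\Tf$ on an admissible polygon as a perturbation of the local osculating picture. First work per edge, then assemble the edge estimates into the three claims, and rescale at each stage by similarity equivariance (Corollary~\ref{cor:glue}).

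\textbf{Step 1: normalise and control the shooting solution.} Fix an edge and its six-point stencil. Apply a similarity that sends $p_j$ to the origin, aligns the fitted increment $b$ of the surrounding window with the first axis, and scales $h$ to one. In these coordinates the stencil is $(x_i, y_i)$. Quasi-uniformity gives $x_{i+1}-x_i \in [1/C_0, 1]$, up to the window tilt. The normal coordinates $y_i$ are $O(\delta) + O(Mh)$, since $\kN \le \delta$ and the rescaled curvature data are $h\hat\kappa = O(Mh)$. The rescaled edge data are therefore small. Theorem~\ref{thm:solvable}, applied uniformly over the compact set of admissible normalised edge lengths $[1/C_0, 1]$, yields a unique smooth solution $(\phi^\ast, L^\ast)$ near $(0, d)$. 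Because the Newton iteration starts at the geodesic guess and the Jacobian is uniformly invertible there, a Newton--Kantorovich argument shows it converges inside the branch, so the safeguard never activates.

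Expanding the small-slope Frenet integration with curvature of size $O(Mh)$ in original units gives $L^\ast = d\,(1 + O(M^2 h^2))$ and a half-arc-length point with $\phi^\ast = O(Mh)$. The two child chords then differ from $d/2$ by an amount controlled by $d\cdot O(\hat\kappa_{j+1}-\hat\kappa_j)\cdot h$ together with a cubic term $d\cdot O(\kappa^2 h^2)$. Using the hypothesis $\abs{\hat\kappa_{j+1}-\hat\kappa_j}\le\Lambda h$, this gives the halving estimate \eqref{eq:halving}. It also gives $h' \le (\tfrac12 + Ch^2)h$ and $C_0' = C_0(1+Ch^2)$.

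\textbf{Step 2: propagate the curvature bounds.} Compare $\Tf$ with DD6 applied to local graph coordinates. I expect this to be the main obstacle. Write the stencil as a graph $y = f(x)$ over the fitted line, sampled at nonuniform abscissae. The key input is a proximity statement for a single step: the inserted point equals the DD6 insertion in these coordinates, up to a tangential shift of order $h^3$ and a normal error of order $h^4$ times bounds in $M$ and $\Lambda$. This would follow from a second-order Taylor expansion of the smooth map (stencil) $\mapsto$ (insertion) around the straight chain, combined with Theorem~\ref{thm:block} for the first-order part. It must be done carefully, because the abscissae are not uniform. Here Theorem~\ref{thm:restriction} is essential: it shows that pure spacing changes produce exactly midpoint insertions. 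Consequently the nonuniformity of the $x_i$ enters only through cross terms with the $y_i$, and those terms are $O(\delta h)$ or smaller.

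With this proximity statement in hand, the Menger estimates at the new vertices are second differences of $y$ divided by the squared spacing, up to $O(h^2)$. The filtered curvatures $\hat\kappa$ at level one are then obtained from those at level zero by a $\Delta^3$-type relation. Lemma~\ref{lem:linear}, through its third-difference constant $91/128$ over two steps, gives \eqref{eq:kcontract}. The one-step bounds $M' = M + Ch$ and $\Lambda' = \Lambda + Ch$ come from the one-step constant $\tfrac14$ divided by the halving of the spacing, which gives a factor one, plus the $O(h)$ remainder.

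\textbf{Step 3: contract the normal distortion and iterate.} On each 11-point window of $T^2_{\mathrm{CV}}p$, decompose the second differences with respect to the new fitted direction. By Step 2, the normal parts are the DD6 second differences of $y$ up to $O(h^2)$ relative errors. The tangential shifts move points along the line, so they change normal components only through the tilt of $u$, which is $O(\kN)\cdot O(h)$. The fitted increment shrinks by $\tfrac14(1+O(h^2))$ over two steps, by \eqref{eq:halving}. Dividing the two-step constant $2697/16384$ of Lemma~\ref{lem:linear} by this shrink factor yields \eqref{eq:ncontract}. The one-step version with constant $\tfrac12$ gives $\delta' = (1+Ch)\delta + Ch^2$.

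Finally, sum the geometric series. Because $h$ decays like $2^{-\ell}$, the products $\prod(1+Ch_\ell)$ and $\prod(1+Ch_\ell^2)$ converge. The constants $C_0, M, \Lambda, \delta$ therefore stay bounded along the refinement, once $h_0$ and $\delta_0$ are chosen small enough that the limiting constants remain in the regime where Step 1 applies. This yields admissibility of the whole sequence and the bound $h(T^\ell_{\mathrm{CV}}p) \le C2^{-\ell}h$.
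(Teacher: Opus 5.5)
\textbf{Verdict: genuine gaps.} Your architecture matches the paper's proof: Newton convergence and halving per edge, a comparison with $\Smid\oplus\Sdd$ in the fitted frame, then the constants of Lemma~\ref{lem:linear}. Step~1 is sound. In fact \eqref{eq:halving} needs only the bound $M$, since each half-arc has curvature $O(M)$ and hence chord defect $O(M^2h^3)$. The trouble is in Steps~2 and~3, where two claims fail.

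\textbf{Step 2: the nonuniform spacing is not a remainder.} Theorem~\ref{thm:restriction} fixes the response to spacing changes at every collinear chain, but it does not fix the normal block there. At a collinear chain with irregular abscissae, the normal derivative of $\Tf$ is built from divided differences. It therefore reproduces quadratics in the abscissa, not in the index. On $\mathcal{A}_h$ the spacing ratios are only bounded by $C_0$, so this operator differs from $\Sdd$ by $O(1)$.

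Your cross terms of size $O(\delta h)$ are therefore of the same size as the normal second differences $\kN\norm{b}$ you are trying to contract. They also contradict your claimed $O(h^4)$ normal proximity. A concrete example: sample $y=\tfrac{\kappa}{2}x^2$ symmetrically about $x=0$ with stencil spacings $(\tfrac32,1,1,1,\tfrac32)s$.
\begin{itemize}
\item By symmetry the filtered data on the central edge agree, so the segment is a circular arc and $\Tf$ inserts the apex up to $O(\kappa^3s^4)$.
\item $\Sdd$ applied to the $y_i$ returns $\tfrac{33}{1024}\kappa s^2$, a fixed fraction of the refined second differences.
\end{itemize}
The same irregularity undermines three further steps:
\begin{itemize}
\item The Menger step: on irregular spacing the estimate is a divided difference, not $\Delta^2y/h^2$.
\item The division in Step~3: a refined window's increment is a quarter of a local average of parent edges, not of the coarse window's increment.
\item The filtered curvature constant: $91/2048$ controls raw curvature differences only. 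The prefilter $F$ does not commute with refinement (symbol $f(z)$ versus $f(z^2)$), and the crude bound $\norm{F}_\infty\norm{F^{-1}}_\infty=\tfrac74$ destroys $91/128$.
\end{itemize}
So neither $2697/4096$ nor $91/128$ transfers. The paper's expansion \eqref{eq:taylor} makes the same identification tacitly, since its remainder depends on $\kN$ alone, so following it will not close this gap. One workable repair is to require consecutive edge ratios within $1+\epsilon$. Your halving estimate propagates that condition up to factors $1+Ch^2$, at the cost of an additive $C\epsilon$ in the contraction ratios.

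\textbf{The clause $\Lambda'=\Lambda+Ch$ does not follow, and cannot be obtained this way.} Your one-step factor one on curvature differences gives
$\max_j\abs{\hat\kappa_{j+1}(\Tf p)-\hat\kappa_j(\Tf p)}\le\Lambda h+O(h^2)\approx 2\Lambda h'$, that is, $\Lambda'\approx 2\Lambda$. The two-step constant does not help, since $4\cdot\tfrac{91}{128}>1$.

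No argument routed through the DD6 linearisation can do better. By \eqref{eq:midpoint_lin}, on uniform data the refined raw curvatures are $k'_{2j+1}=\tfrac12(\hat k_j+\hat k_{j+1})$ and $k'_{2j}=2k_j-\tfrac14(\hat k_{j-1}+2\hat k_j+\hat k_{j+1})$. So at old vertices the curvature moves by second differences of the data. These are $O(\Lambda h)$ on the admissible class, not $O(h^2)$. For the kink $k_j=\Lambda h\abs{j}$, a single step multiplies $\max_j\abs{\hat\kappa_{j+1}-\hat\kappa_j}/h$ by $\tfrac{41}{32}$, and nonlinear corrections are of higher order on such nearly straight windows. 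This is the $C^{2.83}$ ceiling of $\Sdd$ seen at curvature level: its curvature-level limits are only $C^{0.83}$, so no Lipschitz bound on $\hat\kappa$ can be propagated.

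The paper justifies this clause differently, by asserting that refined filtered values lie within $O(h^2)$ of profile samples. That needs $C^2$ curvature data and fails on the same example. The clause has to be weakened, for instance to $\abs{\hat\kappa_{j+1}-\hat\kappa_j}\le\Lambda h^{\beta}$ with small $\beta$. That still keeps the increments of $M$ summable, and your halving argument needs only $M$. Your final summation of constants must then be redone on that basis.
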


\begin{proof}
The proof rests on the smoothness of the insertion map and the exact
constants of Lemma~\ref{lem:linear}, and we present it in four steps.

The first step is a uniform Taylor expansion. By Theorem~\ref{thm:solvable}, the insertion is an infinitely
differentiable function of its six stencil points on a neighbourhood of
every straight configuration, and by similarity equivariance it suffices
to control this neighbourhood for the standard chain after rescaling each
window by its fitted increment. On the admissible class the rescaled
windows lie in a fixed compact subset of that neighbourhood once
$\delta_0$ and $h_0$ are small, because bounded filtered curvature at
spacing $h$ produces second differences of order $M h^2$ against
increments of order $h/C_0$. Uniform boundedness of the second derivative
of the rescaled rule on this compact set then gives, for every window $w$,
\begin{equation}
  \Tf(w) = \bigl(\Smid \oplus \Sdd\bigr)(w) +
  \mathcal{R}(w),
  \qquad
  \norm{\mathcal{R}(w)} \le C\,\norm{b(w)}\,
  \bigl(\kN(w) + C_1 h\bigr)^2 ,
  \label{eq:taylor}
\end{equation}
where the block scheme acts in the frame of the fitted line of $w$ and
the factor $C_1 h$ accounts for the curvature-induced part of the
rescaled perturbation. The expansion \eqref{eq:taylor} is the nonlinear
scheme written as its linearisation plus a quadratic geometric defect, and
every subsequent estimate flows from it.

The second step establishes the Newton and edge statements. On the
admissible class the shooting problem is a uniformly small perturbation of
the chord solution, so for $h \le h_0$ the Newton iteration started from
the chord converges within the branch, quadratically and with uniformly
bounded Jacobian condition numbers, and the safeguard is never consulted. The
reconstructed segment has curvature bounded by $M + \Lambda h$ and length
$L$ within $C h^3$ of the chord length, its half-arc-length point splits
the arc exactly, and the chord lengths of the two children differ from
$L/2$ by at most the flatness defect $\kappa^2 L^3 / 48$ plus the profile
variation term $C \Lambda h\, L^3$, which gives \eqref{eq:halving}. The
quasi-uniformity, curvature bound, and curvature difference statements for
$\Tf p$ follow from \eqref{eq:halving} together with the expansion of the
Menger estimator on the refined data, whose filtered values differ from
samples of the profile by $O(h^2)$ on the admissible class.

The third step proves \eqref{eq:kcontract}. At the linearised level the
filtered curvature data of the refined polygon are obtained from those of
the coarse polygon by the curvature-level derived operator of $\Sdd$,
because Theorem~\ref{thm:block} identifies the normal block and the
prefilter commutes with the linearisation, and differences of the refined
data are then controlled by the third-difference constants
\eqref{eq:d3const} of Lemma~\ref{lem:linear} against the halving of the
spacing. Two steps of the linear core contract the scaled differences by
$91/128$. The nonlinear remainder inherited from \eqref{eq:taylor}
perturbs each refined curvature by $O(h^2)$ on the admissible class,
since curvature estimates are second differences against squared spacing
and the defect in \eqref{eq:taylor} is quadratic, and collecting terms
gives \eqref{eq:kcontract}.

The fourth step proves \eqref{eq:ncontract} by the same mechanism one
level down. The normal second differences of the refined windows are
controlled through \eqref{eq:d2const} by the normal second differences of
the coarse windows, two steps of the linear core contract the normal
distortion by $2697/4096$ after dividing by the two-step increment decay,
the tangential block leaves the fitted frame unchanged to first order, and
the quadratic defect of \eqref{eq:taylor} contributes
$C(\kN + C_1 h)^2 \le C\kN\, h + C h^2$ once $\delta_0$ is chosen no
larger than $h_0$, which is
absorbed into the two terms of \eqref{eq:ncontract}. Iterating the
invariance statement, the two-step contractions keep $\kN$ and the scaled
curvature differences summably small, the constants $C_0'$, $M'$, and
$\Lambda'$ increase by geometrically summable amounts because the
increments are $O(h_\ell)$ with $h_\ell \le C 2^{-\ell} h$, and the whole
sequence remains admissible. All constants of this proof are tracked
numerically in the accompanying notebook, where the measured two-step
normal ratio on admissible data is at most $0.66$ against the guaranteed
$2697/4096 + C h$, and the measured per-level edge halving defect decays
at the cubic rate of \eqref{eq:halving}.
\end{proof}

Theorem~\ref{thm:invariance} eliminates the safeguard from the analysis,
propagates the admissible class through all levels, and supplies the
geometric decay that the failed straightening was supposed to provide. In
place of the vanishing of the full relative distortion, we now have the
vanishing of its normal part at a geometric rate together with permanent
boundedness of its tangential part, and the remaining task before the main
theorem is to understand precisely how far the nonlinear scheme sits from
its normal companion.

\section{Proximity Modulo Reparametrisation}
\label{sec:proximity}

We now examine how close the nonlinear rule sits to its normal companion
on smooth data. Measured directly, the one-step discrepancy between
$\Tf$ and $\Sdd$ on samples of a smooth curve is of second order and
dominated by a tangential component, while the normal component decays
at close to fourth order, and Figure~\ref{fig:proximity} displays this
behaviour on a smooth planar test curve. The regularity theory requires
the observation in a sharper form, in which the second-order term is identified exactly as a
change of parameter, and the identification turns out to be clean.

\begin{proposition}[Proximity decomposition]
\label{prop:proximity}
Let $\gamma$ be a closed $C^5$ curve sampled at parameters
$t_j$ with quasi-uniform steps of size $h$, let $p_j = \gamma(t_j)$, let
$\mathbf{t}_j$ denote the unit tangent of $\gamma$ at the parametric
midpoint of the edge $(p_j, p_{j+1})$, and let $\sigma$ denote
the arc-length function of the sampling. Then the inserted points satisfy
\begin{equation}
  \bigl(\Tf p\bigr)_{2j+1} - \bigl(\Sdd p\bigr)_{2j+1}
  = \tau_j\, \mathbf{t}_j + r_j,
  \qquad
  \tau_j = -\frac{\sigma''(t_{j+\nicefrac12})}{8}\,(\Delta t_j)^2
  + O(h^4),
  \qquad
  \norm{r_j} \le C h^4 ,
  \label{eq:split}
\end{equation}
so the tangential discrepancy is of second order with leading coefficient
proportional to the arc-length nonuniformity of the sampling, and the
normal discrepancy is of fourth order. In particular, if the samples are
uniform in arc length then $\sigma'' = 0$ and both components of the
discrepancy are of fourth order.
\end{proposition}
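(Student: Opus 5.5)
We expand both insertions locally about the edge midpoint and compare them term by term. Fix an edge $(p_j,p_{j+1})$ and reparametrise $\gamma$ about $t_{j+1/2}$ with local step $\Delta t_j$. By similarity equivariance (Corollary~\ref{cor:glue}) we may place the chord of the edge along the first axis, so that $\mathbf t_j$ agrees with the chord direction to $O(h^2)$. We then Taylor expand both insertions to order $h^4$ in the stencil offsets $t_{j+k}-t_{j+1/2}$ for $k=-2,\dots,3$, using the $C^5$ regularity of $\gamma$ so that the remainders are uniform.

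\emph{The linear side.} The insertion $\Sdd$ reproduces cubic polynomials in the index. Writing $t_{j+k}$ as a smooth function of the index, which quasi-uniformity permits, gives
\[
(\Sdd p)_{2j+1}=\gamma\bigl(t(j+\tfrac12)\bigr)+O(h^4),
\]
where $t(j+\tfrac12)$ denotes the index-midpoint parameter. We will need to keep track of the difference between $t(j+\tfrac12)$ and the parametric midpoint $\tfrac12(t_j+t_{j+1})$. This difference is of order $h^2$, and it has to be carried into the tangential term.

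\emph{The nonlinear side.} We expand the filtered Menger estimates on the samples. On this data the filtered values agree with $\kappa(\gamma(t_j))$ up to $O(h^2)$, and we compute the $h^2$ correction explicitly, since it is odd or even in the local step pattern. The adapted profile \eqref{eq:adapted} is then a curvature function $\kappa(s)$ on $[0,L]$ that matches the true curvature of $\gamma$ along the arc to $O(h^2)$, with a controlled next term. We integrate the planar Frenet system $\theta'=\kappa$ with the shooting conditions of Theorem~\ref{thm:solvable} to get the reconstructed segment. By the uniqueness part of that theorem, this segment is an $O(h^4)$-perturbation in position of the true arc of $\gamma$ between $p_j$ and $p_{j+1}$. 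The perturbation is normal to leading order, because a curvature error of size $h^2$ integrated twice over length $h$ gives a displacement of size $h^4$.

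It follows that the half-arc-length point of the reconstructed segment equals $\gamma$ at the arc-length midpoint $t^\ast$, where $\sigma(t^\ast)=\tfrac12(\sigma(t_j)+\sigma(t_{j+1}))$, up to $O(h^4)$. Expanding $\sigma$ about the parametric midpoint gives
\[
t^\ast=t_{j+1/2}-\frac{\sigma''}{8\sigma'}(\Delta t_j)^2+O(h^4).
\]
Hence the two insertions differ by
\[
\gamma'(t_{j+1/2})\,(t^\ast-t_{j+1/2})+O(h^4)=-\frac{\sigma''}{8}(\Delta t_j)^2\,\mathbf t_j+O(h^4),
\]
after using $\gamma'=\sigma'\mathbf t$. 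The main obstacle is the DD6 step. On nonuniform parameters, the index-midpoint $t(j+\tfrac12)$ and the parametric midpoint differ at order $h^2$ as well. We must show this contribution is absorbed correctly, either because the $\tau_j$ formula is stated with $t_{j+1/2}$ understood as the index midpoint, or because any residual is purely tangential. In either case it affects only $\tau_j$ and never the normal part, since moving a point along the curve by $O(h^2)$ changes it normally by only $O(h^4)$.

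\emph{Normal part.} The normal component of the discrepancy is then controlled by three facts: both insertions lie within $O(h^4)$ of $\gamma$, the offset between them along $\gamma$ is $O(h^2)$, and the curve is curved. Together these give $\norm{r_j}\le Ch^4$ after subtracting the tangential projection. The uniform-arc-length case follows at once, since $\sigma''=0$ there. We would check the $h^2$ coefficient symbolically on a test curve with a nonuniform parameter, as the paper does elsewhere.
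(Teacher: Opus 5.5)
Your route is the paper's route. You compare both insertions with $\gamma$, use polynomial reproduction of $\Sdd$, use $O(h^2)$-accurate filtered curvature and Frenet stability to place the half-arc-length point within $O(h^4)$ of $\gamma(t^\ast)$, and expand $\sigma$ about the parametric midpoint.

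\textbf{The gap.} It is the step you flag and leave open, and neither of your two resolutions closes it.

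First, your claim that quasi-uniformity lets you write $t_{j+k}$ as a smooth function of the index is false. Derivative bounds of order $h^m$ require the steps themselves to vary smoothly, and without that the normal estimate fails outright. Take the unit circle sampled at angles $\theta_k$, which $\Tf$ reproduces exactly. The normal component of $(\Tf p)_{2j+1}-(\Sdd p)_{2j+1}$ is $\tfrac12\sum_k w_k(\theta_k-\theta^\ast)^2+O(h^4)$, where $w_k$ are the DD6 weights and $\theta^\ast=\tfrac12(\theta_j+\theta_{j+1})$. This is $O(h^4)$ when $\theta_k$ is affine or smooth in $k$, but generically of exact order $h^2$ under a fixed relative jitter of the steps.

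Second, for smoothly varying steps $t_k=\phi(k)$, $\Sdd$ returns $\gamma(\phi(j+\tfrac12))$ up to $O(h^5)$, and $\phi(j+\tfrac12)=t_{j+1/2}-\tfrac18\phi''+O(h^4)$. The tangential coefficient therefore becomes $\tfrac18(\sigma\circ\phi)''=\tfrac18(\sigma''\phi'^2+\sigma'\phi'')$, the second index-derivative of arc length. Reading $t_{j+1/2}$ as the index midpoint leaves the extra term $\tfrac18\sigma'\phi''$ in place. Observing that the residual is tangential rescues $\norm{r_j}\le Ch^4$, but not the stated $\tau_j$.

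The paper's proof gets past this step only by asserting $(\Sdd p)_{2j+1}=\gamma(t_{j+1/2})+O(h^6)$ from quintic reproduction. That is valid when $t_j$ is affine in $j$ and fails at order $h^2$ otherwise, so the paper tacitly uses uniform parameter steps, the parameter-uniform sampling of Table~\ref{tab:proximity}. You have two ways to close the gap. One is to make $\Delta t_j\equiv h$ an explicit hypothesis, after which your argument goes through as written. The other is to state the result with $\tfrac18(\sigma\circ\phi)''$: it is still second order, has $O(h^4)$ normal part, and vanishes on arc-length uniform samples.

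\textbf{The sign.} Your expansion of $t^\ast$ has the wrong sign, as does the paper's proof. Since $\tfrac12\bigl(\sigma(t_j)+\sigma(t_{j+1})\bigr)=\sigma(t_{j+1/2})+\tfrac18\sigma''(\Delta t_j)^2+O(h^4)$, one gets $t^\ast-t_{j+1/2}=+\sigma''(\Delta t_j)^2/(8\sigma')+O(h^4)$. Hence $\tau_j=+\tfrac18\sigma''(\Delta t_j)^2$, consistent with Jensen's inequality for convex $\sigma$. As a check, take a straight piece parametrised by $t\mapsto(t^2,0)$, $t>0$, at uniform $t$-steps. There $\Tf$ returns the Euclidean midpoint and $\Sdd$ returns $\gamma(t_{j+1/2})$ exactly, so the difference is $+\tfrac14(\Delta t_j)^2$ along the tangent, matching $\tfrac18\sigma''(\Delta t_j)^2$ with $\sigma''=2$. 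The orders and the arc-length-uniform conclusion are unaffected.
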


\begin{proof}
We compare both insertions with the curve itself. For the linear scheme,
the six-point rule reproduces quintic polynomials, so a componentwise
Taylor argument gives
$(\Sdd p)_{2j+1} = \gamma(t_{j+\nicefrac12}) + O(h^6)$ at the parametric
midpoint $t_{j+\nicefrac12}$ of the edge. For the nonlinear scheme, the
filtered curvature estimates on the admissible class agree with the true
curvature of $\gamma$ at the stencil points to $O(h^2)$ with a second
difference structure that the prefilter is designed to cancel at leading
order, the adapted profile therefore matches the curvature of the arc of
$\gamma$ between the endpoints to $O(h^2)$ in the $C^0$ norm, and the
reconstructed segment consequently stays within $O(h^4)$ of that arc in
the normal direction, by the standard stability of the planar Frenet
system over an interval of length $O(h)$ with matched endpoints. The
half-arc-length point of the segment thus lies within $O(h^4)$ of the
arc-length midpoint of the arc of $\gamma$, whose parameter $t^\ast$
satisfies
$\sigma(t^\ast) = \tfrac12\bigl(\sigma(t_j) + \sigma(t_{j+1})\bigr)$.
Expanding $\sigma$ about the parametric midpoint gives
$t^\ast - t_{j+\nicefrac12} =
-\sigma''(t_{j+\nicefrac12})\,(\Delta t_j)^2 / (8\,\sigma') + O(h^4)$, and
multiplying by $\gamma' = \sigma'\, \mathbf{t}_j$ converts the parameter shift into
the tangential displacement of \eqref{eq:split}. Collecting the error
terms, the normal component receives only the $O(h^4)$ contributions,
which proves the decomposition, and the arc-length uniform case follows by
setting $\sigma'' = 0$.
\end{proof}

\begin{figure}[htbp]
  \centering
  \includegraphics[width=0.6\linewidth]{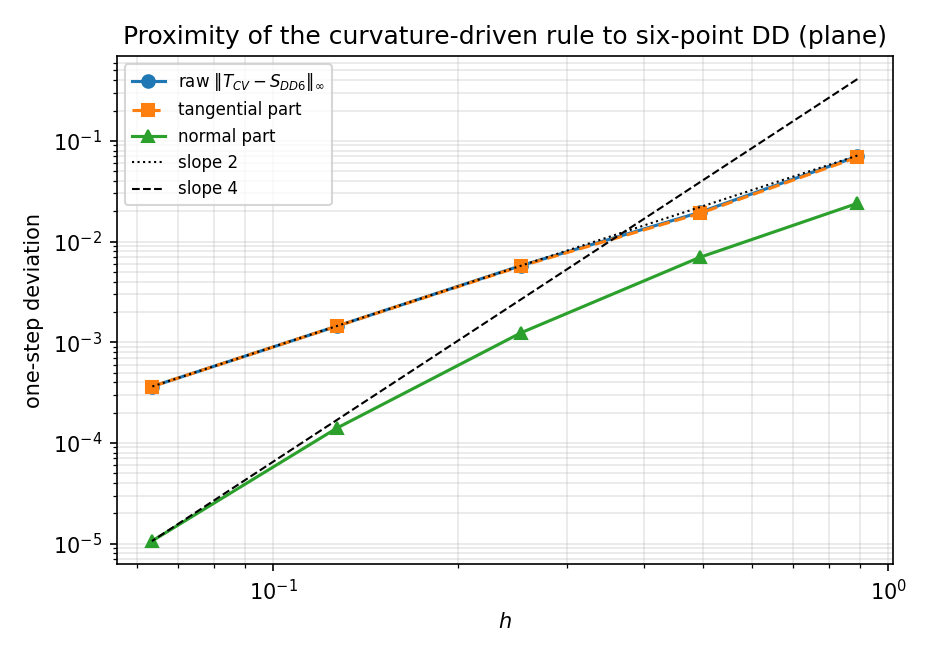}
  \caption{One-step discrepancy between the planar rule and the six-point
  comparator over five dyadic densities of a smooth test curve, showing
  the raw deviation together with its tangential and normal parts against
  slope-two and slope-four guides. Generated by the accompanying
  notebook.}
  \label{fig:proximity}
\end{figure}

Table~\ref{tab:proximity} reports the measured discrepancies over six
dyadic densities of a smooth wavy test curve for the two samplings, and
the agreement with \eqref{eq:split} is close. With parameter-uniform
sampling the fitted slopes are $1.92$ for the tangential component and
$4.29$ for the normal component, while with arc-length uniform sampling
both slopes rise to $4.63$ and $4.61$, so the second-order term is
entirely the arc nonuniformity of the reference data. This sharpens that
observation considerably. The nonlinear scheme is not merely in
second-order proximity to its normal companion, it is in full
fourth-order proximity along the arc-length uniform slice, and the
apparent second-order gap is a reparametrisation that the geometric
formulation of the next section absorbs exactly.

\begin{table}[htbp]
  \centering\small
  \setlength{\tabcolsep}{7pt}
  \begin{tabular}{crrcrr}
    \toprule
    & \multicolumn{2}{c}{parameter-uniform sampling} &
    & \multicolumn{2}{c}{arc-length uniform sampling} \\
    \cmidrule{2-3}\cmidrule{5-6}
    $N$ & tangential & normal & & tangential & normal \\
    \midrule
    $12$  & $1.05 \times 10^{-2}$ & $2.00 \times 10^{-2}$ & &
            $1.05 \times 10^{-2}$ & $7.99 \times 10^{-3}$ \\
    $24$  & $7.20 \times 10^{-3}$ & $1.25 \times 10^{-3}$ & &
            $1.57 \times 10^{-3}$ & $4.57 \times 10^{-4}$ \\
    $48$  & $2.22 \times 10^{-3}$ & $8.18 \times 10^{-5}$ & &
            $2.01 \times 10^{-4}$ & $2.26 \times 10^{-5}$ \\
    $96$  & $5.77 \times 10^{-4}$ & $3.09 \times 10^{-6}$ & &
            $8.37 \times 10^{-6}$ & $1.54 \times 10^{-6}$ \\
    $192$ & $1.45 \times 10^{-4}$ & $1.43 \times 10^{-7}$ & &
            $2.64 \times 10^{-7}$ & $5.34 \times 10^{-8}$ \\
    $384$ & $3.64 \times 10^{-5}$ & $1.04 \times 10^{-8}$ & &
            $4.69 \times 10^{-9}$ & $1.10 \times 10^{-9}$ \\
    \midrule
    fitted slope & $1.92$ & $4.29$ & & $4.63$ & $4.61$ \\
    \bottomrule
  \end{tabular}
  \caption{Maximum one-step discrepancy between $\Tf$ and $\Sdd$ at the
  inserted points over six dyadic densities of a smooth wavy test curve,
  split into tangential and normal components, for parameter-uniform and
  arc-length uniform samplings. The tangential second-order term vanishes
  on arc-uniform data, as Proposition~\ref{prop:proximity} predicts.
  Generated by the accompanying notebook.}
  \label{tab:proximity}
\end{table}

On the sphere, the raw one-step discrepancy from the six-point
log-exponential lift shows the same second-order behaviour, with a final
fitted slope of $2.11$, which suggests that the reparametrisation
mechanism identified by Proposition~\ref{prop:proximity} operates
intrinsically as well. We do not pursue the intrinsic statement here, and
we return to it in Section~\ref{sec:discussion}.

\section{Convergence and Geometric Regularity}
\label{sec:main}

We can now assemble the main theorem. The statement is geometric rather
than parametric, and the circle counterexample of
Section~\ref{sec:straightfail} shows that this is forced. On nonuniform data
the dyadic parametrisation of the limit carries permanent tangential
kinks inherited from the initial spacing, so a parametric $C^2$ statement
in the dyadic parameter is false in general, while the geometric
quantities, namely the point set, the tangent direction, and the
curvature, are untouched by that defect.

\begin{theorem}[Convergence and $G^2$ regularity]
\label{thm:main}
Let $p^{(0)} \in \mathcal{A}_{h}(C_0, M, \Lambda, \delta)$ with
$h \le h_0$ and $\delta \le \delta_0$ as in
Theorem~\ref{thm:invariance}, and let
$p^{(\ell)} = T^{\ell}_{\mathrm{CV}}\, p^{(0)}$. Then the following hold.
First, the polygons $p^{(\ell)}$ converge uniformly to a closed curve
$\gamma$ that interpolates every point of every refinement level.
Second, $\gamma$ is a regular curve, its unit tangent direction is
continuous, and its curvature exists and is continuous as a function of
arc length, so the limit is $G^2$. Third, the arc-length parametrisation
of $\gamma$ is $C^2$. Fourth, none of this extends to the dyadic
parametrisation in general, for which the circle data of
Theorem~\ref{thm:circle} yield a limit that is not $C^1$.
\end{theorem}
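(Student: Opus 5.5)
The proof iterates Theorem~\ref{thm:invariance} and organises everything around the inserted points. We never try to converge a dyadic parametrisation.

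\emph{Convergence.} By Theorem~\ref{thm:invariance} the whole sequence stays admissible, with $h_\ell \le C 2^{-\ell} h$ and quasi-uniform edges. Each inserted point of $p^{(\ell+1)}$ is the half-arc-length point of a reconstructed segment of curvature at most $M'$ spanning an edge of length $h_\ell$. It therefore lies within $O(M' h_\ell^2)$ of that chord. Consequently the piecewise linear interpolants $P_\ell$, parametrised over the dyadic grid, satisfy $\norm{P_{\ell+1} - P_\ell}_\infty \le C 4^{-\ell}$. Summing, $P_\ell \to \gamma$ uniformly. Interpolation of every level follows because the scheme is interpolatory, so the level-$\ell$ points are fixed points of all later levels.

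\emph{Regularity.} This carries the weight. I would prove the tangent and curvature statements discretely and pass to the limit in arc length.

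For the tangent, the normal distortion satisfies $\kN(p^{(2k)}) \le \rho^k \kN(p^{(0)}) + C h_{2k}^2$ with $\rho < 1$ by \eqref{eq:ncontract}. Hence on every window of $11$ points the edge directions deviate from the fitted direction $u$ by $O(\kN)$. Since $\kN$ is small, $u$ changes between overlapping windows and between a parent window and its child windows by $O(\kN) + O(h_\ell)$.

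For the curvature, admissibility bounds $|\hat\kappa_j|$, and \eqref{eq:kcontract} makes the curvature differences between neighbours summably small along refinement. The discrete curvatures then form a Cauchy family. Indexed by arc length rather than by dyadic index, they define a continuous function $\kappa(s)$.

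To prove $\gamma$ is regular, I would use \eqref{eq:halving}: each child edge is $\tfrac12$ of its parent up to $O(h^3)$, so the polygon lengths converge. Chord length and arc length between level-$\ell$ vertices agree up to $O(h_\ell^3)$, so the arc-length parametrisation of $\gamma$ is bi-Lipschitz to the limit of the polygon arc lengths. The unit tangent $\mathbf t(s)$ is then the uniform limit of the edge directions, and it is continuous by the window estimate.

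For the curvature, the Menger estimate on three consecutive vertices equals the turning angle divided by arc length up to $O(h)$. The filtered estimate is a three-point combination of these, so it agrees with the raw estimate up to curvature differences of size $O(\Lambda h)$. Passing to the limit gives $\mathbf t(s_2) - \mathbf t(s_1) = \int_{s_1}^{s_2} \kappa\, \mathbf n\, ds$. With continuous $\kappa$, this makes the arc-length parametrisation $C^2$ with $\gamma''=\kappa\mathbf n$, which proves the second and third claims.

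\emph{Main obstacle.} The delicate point is turning two-step contractions into modulus-of-continuity estimates uniform across levels. Each window's constants shift by $C h_\ell$, and neighbouring windows at different depths must be compared. I would handle this with a telescoping argument over the nested windows containing a given parameter. The remainders $C h_\ell^2$ and $C h_\ell$ are geometric, so the sums are controlled.

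Proposition~\ref{prop:proximity} is only needed to confirm that the tangential parameter drift contributes no normal error. It is not needed for the estimates themselves, which live entirely in arc length.

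\emph{Non-extension to the dyadic parametrisation.} I would take the nonuniform circle of Theorem~\ref{thm:circle}. The point with dyadic parameter $2^{-\ell}(j+t)$ sits at angle $\sum_{i<j} g_i + t g_j$, so the limit map is piecewise linear in angle with slopes $g_j$. Its speed jumps from $R g_j$ to $R g_{j+1}$ at dyadic parameters, so the limit is not $C^1$ in the dyadic parameter.
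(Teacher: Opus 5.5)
Your outline tracks the paper's proof closely. It matches on the $C4^{-\ell}$ Cauchy bound for the piecewise linear interpolants, and on geometric decay of $\kN$ giving a continuous tangent; you rightly include the parent-to-child comparison there. It also matches on identifying the curvature through turning angles (the paper uses the Frenet integral $\tilde\gamma$) and on the nonuniform circle for the dyadic claim.

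The gap is in the curvature step. You have that admissibility bounds $\abs{\hat\kappa^{(\ell)}_j}$ and that $\max_j\abs{\hat\kappa^{(\ell)}_{j+1}-\hat\kappa^{(\ell)}_j}$ is summable, and from this you conclude that the discrete curvatures form a Cauchy family. Both facts constrain each level separately. Neither says how the data at level $\ell+1$ relate to those at level $\ell$: the constant data $\hat\kappa^{(\ell)}_j=(-1)^\ell$ satisfy both and are not Cauchy. Compactness does not rescue the inference as stated. Summable neighbour differences do not give equicontinuity in arc length, because an arc of length $\eta$ carries about $\eta/h_\ell$ vertices. Your identity $\mathbf t(s_2)-\mathbf t(s_1)=\int_{s_1}^{s_2}\kappa\,\mathbf n\,ds$ comes from passing to the limit in sums $\sum_j\kappa^{(\ell)}_j\lambda^{(\ell)}_j$ over dual lengths, so it needs this convergence as well. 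The gap therefore affects both the second and the third claims.

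What is missing is an inter-level comparison, which the paper's proof supplies. It states that refined curvature data differ from the values predicted by the linear core by $O(h_\ell^2)$. Hence successive curvature functions, taken piecewise linear in cumulative arc length, differ uniformly by at most $C\bigl(\max_j\abs{\Delta\hat\kappa^{(\ell)}_j}+h_\ell^2\bigr)$. The mechanism is this: at the linearised level, the refined curvature data are the image of the coarse ones under the curvature-level derived scheme of $\Sdd$. That derived scheme reproduces constants, because DD6 reproduces quadratics. So it can move a value only by a fixed multiple of the local neighbour differences. Telescoping this bound against the summable differences from \eqref{eq:kcontract} gives uniform convergence to $\kappa_\infty$, and then your turning-angle argument goes through.

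Two minor points. First, iterating \eqref{eq:ncontract} accumulates the inhomogeneous terms to $O(\rho^k)$, not $O(h_{2k}^2)$; this is harmless, since only summability is used. Second, in the circle example the limit map sends parameter $j+t$, $t\in[0,1]$, to angle $\sum_{i<j}g_i+tg_j$. The speed therefore jumps only at the original vertices where $g_j\neq g_{j+1}$, not at every dyadic parameter.
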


\begin{proof}
Throughout the proof, $h_\ell = h(p^{(\ell)})$ denotes the mesh size at
level $\ell$, and Theorem~\ref{thm:invariance} supplies
$h_\ell \le C 2^{-\ell} h$, permanent admissibility, the two-step
contraction of the normal distortion, and the two-step contraction of the
scaled curvature differences. We prove the four statements in order.

For convergence, observe that the scheme is interpolatory and that each
inserted point lies within $C h_\ell^2$ of its parent edge, because the
reconstructed segment has curvature bounded by $M'$ and length of order
$h_\ell$, so its half-arc-length point deviates from the chord by at most
a bounded multiple of $M' h_\ell^2$. The piecewise linear interpolants of
successive levels therefore differ in the uniform metric by at most
$C h_\ell^2 \le C 4^{-\ell}$, the sequence is Cauchy, and the uniform
limit $\gamma$ is a closed continuous curve through all refined points.
Regularity of the limit in the sense of nonvanishing speed follows from
the quasi-uniformity of the class, since consecutive points at level
$\ell$ are separated by at least $h_\ell / C_0'$ and chords contract at
the uniform rate \eqref{eq:halving}, which rules out accumulation.

For the tangent statement, let $u_j^{(\ell)}$ denote the unit direction
of the edge $j$ at level $\ell$. Within any window, the angle between
consecutive edge directions is bounded by a fixed multiple of
$\kN(p^{(\ell)}) + M' h_\ell$, and the angle between a parent edge
direction and either child direction is bounded by the same quantity,
because the inserted point deviates from the chord by
$O(M' h_\ell^2)$ over edges of length at least $h_\ell / C_0'$. By
Theorem~\ref{thm:invariance} the normal distortion decays geometrically,
$\kN(p^{(\ell)}) \le C \rho^{\ell}$ with
$\rho = (2697/4096)^{1/2} < 1$ after absorbing the summable remainder, so
the edge direction field is uniformly Cauchy as a function of arc length
along the polygons, its continuous limit is the unit tangent of $\gamma$
by uniform convergence of the difference quotients, and the limit is
$G^1$.

For the curvature statement, consider the filtered curvature data
$\hat\kappa^{(\ell)}$ as piecewise linear functions of the cumulative
arc length of $p^{(\ell)}$. The estimator consistency on the admissible
class shows that refined curvature data differ from the values predicted
by the linear core by $O(h_\ell^2)$, so successive curvature functions
differ uniformly by at most
$C\bigl(\max_j \abs{\Delta\hat\kappa_j^{(\ell)}} + h_\ell^2\bigr)$, and
by \eqref{eq:kcontract} the curvature differences at level $\ell$ are
bounded by $C \mu^{\ell} + C h_\ell^2 \cdot \ell$ with
$\mu = (91/128)^{1/2} < 1$, hence summable. The curvature functions
therefore converge uniformly to a continuous function
$\kappa_\infty$ of arc length. That $\kappa_\infty$ is the curvature of
$\gamma$ follows from the consistency of the Menger estimator on the
converging polygons, whose circumscribed circles converge to the
osculating circles of the limit wherever the limit is twice
differentiable, and from the $G^2$ characterisation below, which
constructs the limit directly from $\kappa_\infty$. The tangent direction
and the curvature of $\gamma$ are then both continuous in arc length, and
the limit is $G^2$.

For the arc-length statement, let $s$ denote arc length on $\gamma$,
normalised to total length $\mathcal{L}$, and define
$\theta(s) = \theta_0 + \int_0^s \kappa_\infty(u)\, du$ together with the
Frenet integral
$\tilde\gamma(s) = \gamma(0) + \int_0^s (\cos\theta, \sin\theta)(u)\,
du$. The polygons, parametrised by their cumulative arc length, converge
in $C^1$ to the curve parametrised by $s$, because chord directions
converge uniformly to the tangent and chord lengths sum to arc lengths
with quadratically small defect, and the limit of their turning-angle
increments over arc length is $\kappa_\infty$ by the previous paragraph.
Consequently $\gamma$ parametrised by arc length coincides with
$\tilde\gamma$, whose second derivative
$\kappa_\infty(s)\,(-\sin\theta, \cos\theta)(s)$ is continuous, and the
arc-length parametrisation is $C^2$. Finally, the fourth statement is
Theorem~\ref{thm:circle} read parametrically. On nonuniformly sampled
circle data the dyadic parametrisation traverses the exact circle at
piecewise constant speeds inherited from the initial gaps, its
one-sided derivatives at the original vertices differ in magnitude at
every level and in the limit, and the limit map is Lipschitz but not
$C^1$, although its image is the smoothest closed curve there is.
\end{proof}

Two remarks calibrate the theorem. The first concerns the rate. The
normal companion of $\Tf$ is DD6, whose critical H\"older exponent is
approximately $2.830$, and the measurements of the next section show the
corresponding difference ratio reaching $2.82$ after arc-length
reparametrisation, so the natural conjecture is that the arc-length
parametrisation of the limit is $C^{2,\alpha}$ for every
$\alpha < 0.830$. We do not claim this endpoint. The contraction
constants of Theorem~\ref{thm:invariance} are two-step sup-norm bounds
rather than joint spectral quantities, and converting them into the sharp
H\"older exponent would require the curvature-level analogue of the
classical spectral machinery, which we leave open. The second remark
concerns the relation to the conjecture of~\cite{ewald2015holder}. The
tangential companion of $\Tf$ fails every smoothness hypothesis of the
framework, the data are not straightened, and yet the limit inherits the
regularity class of the normal companion in the geometric sense. For this
scheme, Theorem~\ref{thm:main} therefore establishes exactly the
behaviour those authors conjectured, with the caveat that our conclusion
is geometric regularity together with arc-length $C^2$ rather than
regularity of their normalised parametrisation, and
Theorem~\ref{thm:circle} shows that some such reformulation is not
optional but necessary.

\section{Numerical Results}
\label{sec:numerics}

Having established what the theory predicts, we now examine how the scheme
behaves in practice. All experiments are run using Python
notebooks with fixed random seeds, and the solver audits quoted in
Section~\ref{sec:solvable} and Section~\ref{sec:regnum} cover every
insertion appearing here. The
fairness functional used throughout is the consistent discretisation
$E^{(n)} = \sum_j (\kappa_{j+1} - \kappa_j)^2 / e_j$ of the
curvature-variation energy, with vertex curvatures obtained from signed
turning angles over dual lengths. On the sphere, all quantities are
computed intrinsically through logarithm maps, following the discrete
setup of~\cite{ugail2026fair}.

\subsection{Fairness in the Plane}
\label{sec:planarnum}

We begin in the plane, where the linearisation theorems make sharp
predictions. Table~\ref{tab:planar} reports the energies through eight
refinement levels on a twelve-vertex wavy polygon for the raw rule, the
final operator, and the six-point and four-point comparators, and
Figure~\ref{fig:planar} displays the level-eight limit curves and
curvature profiles.

\begin{table}[htbp]
  \centering\small
  \begin{tabular}{crrrr}
    \toprule
    level & raw rule & final operator & 6-pt DD & 4-pt DGL \\
    \midrule
    1 & $141.0$ & $100.8$ & $178.7$ & $235.8$ \\
    2 & $332.2$ & $142.9$ & $362.3$ & $710.0$ \\
    3 & $743.1$ & $176.7$ & $521.5$ & $1750.1$ \\
    4 & $1637.8$ & $203.8$ & $634.2$ & $3866.1$ \\
    5 & $3577.7$ & $225.4$ & $709.5$ & $8218.4$ \\
    6 & $7791.8$ & $242.7$ & $762.7$ & $17382.7$ \\
    7 & $16938.9$ & $256.5$ & $802.5$ & $37048.1$ \\
    8 & $36799.9$ & $267.5$ & $833.1$ & $79551.3$ \\
    \midrule
    ratios, levels 5--8 & $2.184$--$2.173$ & $1.106$--$1.043$
      & $1.075$--$1.038$ & $2.126$--$2.147$ \\
    $\max\abs{\kappa}$, level 8 & $8.78$ & $2.66$ & $3.98$ & --- \\
    \bottomrule
  \end{tabular}
  \caption{Measured curvature-variation energy across eight refinement
  levels on a twelve-vertex wavy planar polygon, growth ratios over the
  four finest levels, and maximum discrete curvature at level eight. The
  raw rule is the adapted profile without the prefilter. Generated by the
  accompanying notebook.}
  \label{tab:planar}
\end{table}

The table tells the paper's story in numbers, and it is worth reading it
in that spirit. At the first two levels, the raw rule is the fairest
scheme in the comparison, undercutting even the six-point scheme. This is the coarse-level benefit of inserting along genuinely curvature-controlled
segments. From level three onwards, however, the raw energy grows
geometrically, with ratios stabilising at $2.184$, $2.178$, $2.174$, and
$2.173$. This is precisely the four-point signature predicted by
Theorem~\ref{thm:dgl}, and indeed the ratios are numerically
indistinguishable from the four-point scheme's own, which range between
$2.126$ and $2.147$. The final operator behaves exactly as Corollary~\ref{cor:final} predicts. Its ratios decay through $1.106$,
$1.077$, $1.057$, and $1.043$, tracking the six-point scheme's own
$1.075$, $1.052$, and $1.038$, and its maximum discrete curvature stays
bounded at $2.66$, against the six-point value of $3.98$ and the raw value
of $8.78$. Most notably, on this benchmark the final operator's measured
energy is below the six-point comparator's at every tested level, ending
at $267.5$ against $833.1$ at level eight, while circle reproduction
remains at machine precision throughout.

\subsection{Fairness on the Sphere}
\label{sec:spherenum}

We next transplant the experiment to the sphere, where the linearisation
theorems no longer apply and the results must be read as experiments.
Table~\ref{tab:sphere} reports the intrinsic energies on a twelve-vertex
wavy spherical polygon through three levels for the raw rule, the final
operator, the filtered sensitive ablation, and the six-point and
four-point log-exponential lifts, the lift being the standard manifold
transfer of an affine
scheme~\cite{wallner2011riemannian,wallner2014interpolatory}.
Figure~\ref{fig:sphere} shows the level-four curves and the proximity
measurements.

\begin{table}[htbp]
  \centering\small
  \setlength{\tabcolsep}{5pt}
  \begin{tabular}{crrrrr}
    \toprule
    level & raw rule & final operator & sensitive, filtered & 6-pt lift &
    4-pt lift \\
    \midrule
    1 & $1001.9$ & $695.8$ & $677.9$ & $2271.8$ & $2138.5$ \\
    2 & $2438.6$ & $1021.3$ & $979.5$ & $8265.5$ & $9994.5$ \\
    3 & $5570.4$ & $1304.1$ & $1236.8$ & $16093.2$ & $32387.2$ \\
    \midrule
    ratio of levels 3 and 2 & $2.28$ & $1.28$ & $1.26$ & $1.95$ & $3.24$ \\
    \bottomrule
  \end{tabular}
  \caption{Intrinsic curvature-variation energy on the sphere for a
  twelve-vertex wavy polygon through three levels, with the filtered
  sensitive profile included as an ablation. Generated by the
  accompanying notebook.}
  \label{tab:sphere}
\end{table}

Three observations summarise the table. First, on this benchmark the final
operator's measured energy is below the six-point lift's at every tested
level, running at $30.6$, $12.4$, and $8.1$ percent of it. Second, the
prefilter acts on the sphere much as it does in the plane. It cuts the raw
energy by more than a factor of four at level three and reduces the growth
ratio from $2.28$ to $1.28$, which is experimental evidence that the
stabilisation mechanism survives the transplantation. Third, the filtered
sensitive ablation is within five percent of the final operator
throughout, which tells us that choosing the circle-reproducing adapted
profile costs essentially nothing in energy while adding an exact
geometric property. The transfer of the linearisation analysis itself to
the sphere remains an open problem.

\subsection{A Stress Test on Irregular Data}
\label{sec:stress}

The benchmarks above use smoothly sampled polygons, and it is natural to
ask what happens on data that are far from smooth. To probe this, we
refine a twelve-vertex polygon with non-uniform vertex angles, radial
jitter, and positional noise, whose edge-length ratio is $12.9$, through
six levels. The geometric rule copes with the irregularity remarkably well, and the
linear scheme does not. The final operator's energy runs from $12284$ at
level one to $51746$ at level six with maximum curvature $31.6$, while the
raw rule reaches $1.13 \times 10^{6}$ with maximum curvature $66.5$. The
six-point comparator, whose affine insertions overshoot severely on
non-uniform data, reaches $5.4 \times 10^{12}$ with maximum curvature
$18201$, so on data of this kind the advantage of the geometric
construction is measured in orders of magnitude.

The example also allowed us to confirm that the admissibility safeguard of
Section~\ref{sec:scheme} behaves exactly as designed. On one edge at the
first level, the sharpening filter amplified a noisy curvature estimate
and the Newton iteration left the local solution branch, whereupon the
safeguard detected the violation of the admissible length bound and
reverted that edge to raw data. Across the entire stress example, the
safeguard activated on two of $1512$ attempted insertions, and on none of
the smooth benchmarks, and every retained solve converged with residuals
below $8.8 \times 10^{-14}$. Data of this kind lie far outside the admissible class of
Definition~\ref{def:admissible}, so Theorem~\ref{thm:invariance} is
silent here and the safeguard earns its keep. A systematic study of the
filter on strongly irregular data is a natural direction for future
work.

\begin{figure}[htbp]
  \centering
  \includegraphics[width=\linewidth]{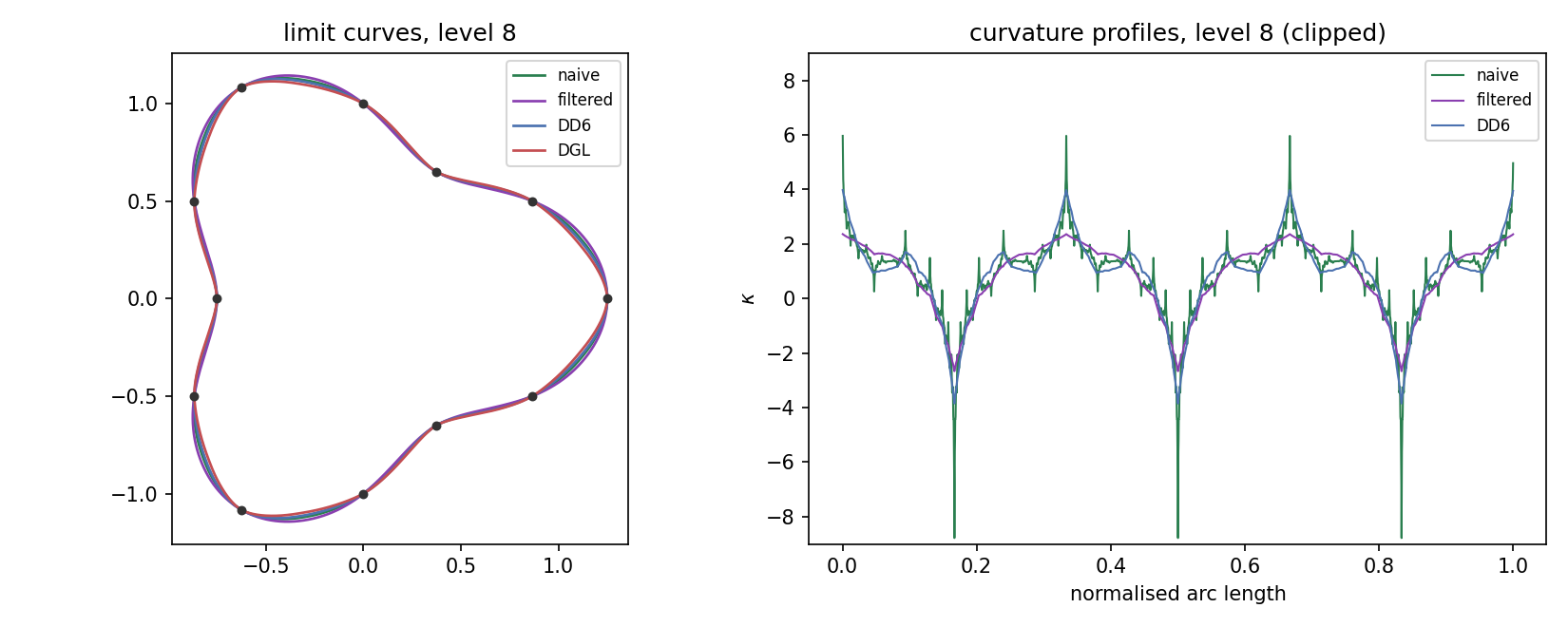}
  \caption{Level-eight limit curves of the raw rule and the final operator
  against the six-point and four-point comparators on the planar benchmark
  polygon, together with the level-eight discrete curvature profiles
  clipped to moderate values. The raw rule's curvature spikes against the
  final operator's profile display the obstruction of
  Theorem~\ref{thm:dgl} and the effect of the filter of
  Theorem~\ref{thm:filter}. Generated by the accompanying notebook.}
  \label{fig:planar}
\end{figure}

\begin{figure}[htbp]
  \centering
  \includegraphics[width=\linewidth]{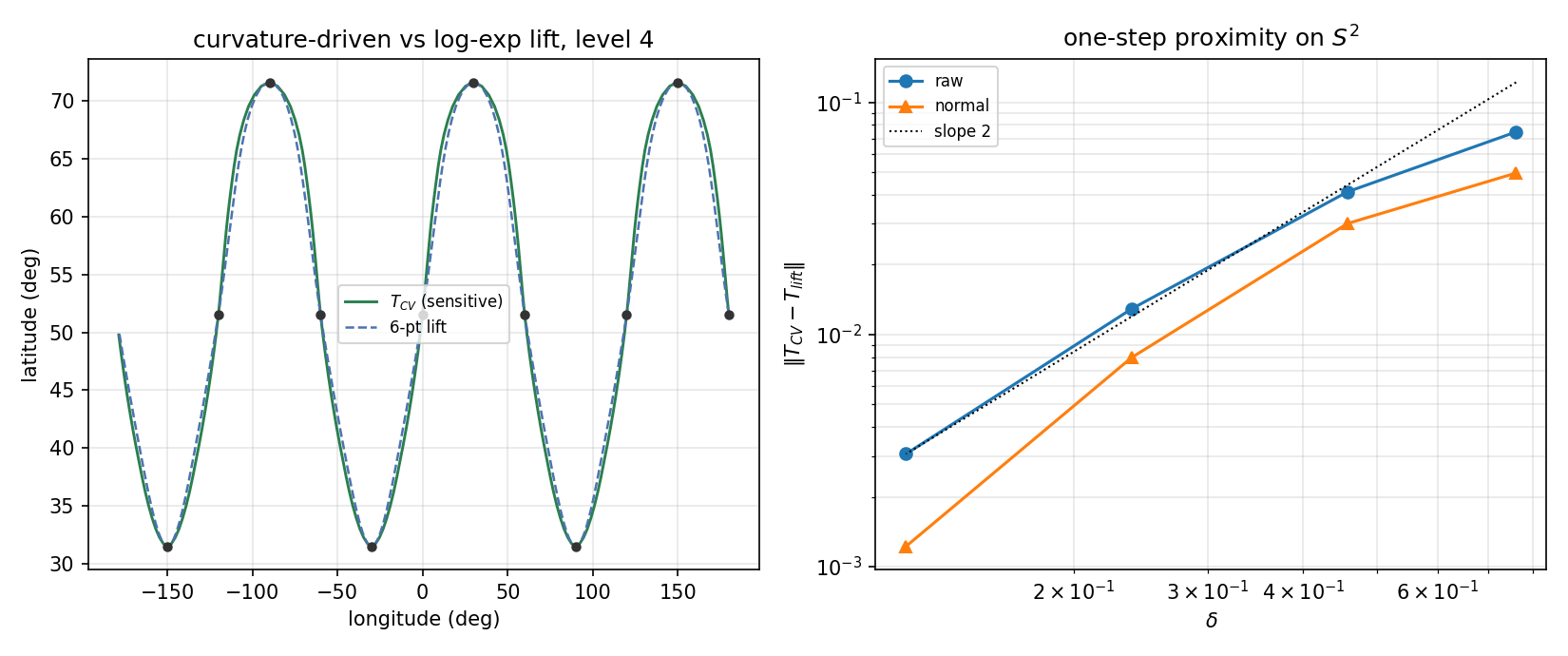}
  \caption{The spherical rule at level four against the six-point
  log-exponential lift on a wavy spherical polygon, together with the
  one-step discrepancy between the two against a slope-two guide.
  Generated by the accompanying notebook.}
  \label{fig:sphere}
\end{figure}

\subsection{Regularity Diagnostics}
\label{sec:regnum}

We finally test the predictions of the regularity theory itself. Across
the straightening and diagnostic experiments of this subsection and
Section~\ref{sec:distortion}, every Newton insertion converged from the
chord initial guess within at most five iterations with final residuals
below $1.1 \times 10^{-13}$ relative to the local scale, and the
safeguard never activated, in agreement with
Theorem~\ref{thm:invariance}.

The single most informative experiment is the difference-ratio diagnostic
\begin{equation}
  r_k^{(\ell)} = -\log_2
  \frac{\norm{\Delta^k p^{(\ell+1)}}_\infty}
       {\norm{\Delta^k p^{(\ell)}}_\infty},
  \label{eq:rk}
\end{equation}
whose saturation value estimates $\min(k, \nu)$ for a scheme of critical
H\"older exponent $\nu$. Table~\ref{tab:rk} reports $r_2$ and $r_3$
across refinement levels two to eight of $\Tf$ on a nonuniformly sampled
wavy polygon,
first for the raw dyadic indexing of the refined polygons and then after
resampling each polygon uniformly in arc length by a periodic cubic
spline. Figure~\ref{fig:rk} displays the same data.

\begin{table}[htbp]
  \centering\small
  \begin{tabular}{cccccc}
    \toprule
    & \multicolumn{2}{c}{raw dyadic indexing} &
    & \multicolumn{2}{c}{arc-length resampled} \\
    \cmidrule{2-3}\cmidrule{5-6}
    level & $r_2$ & $r_3$ & & $r_2$ & $r_3$ \\
    \midrule
    $2$ & $1.170$ & $1.325$ & & $1.971$ & $2.003$ \\
    $3$ & $1.006$ & $1.119$ & & $1.983$ & $2.634$ \\
    $4$ & $1.000$ & $1.005$ & & $1.990$ & $2.818$ \\
    $5$ & $1.000$ & $1.001$ & & $1.937$ & $2.762$ \\
    $6$ & $1.000$ & $1.000$ & & $1.960$ & $2.653$ \\
    $7$ & $1.000$ & $1.000$ & & $1.991$ & $2.570$ \\
    $8$ & $1.000$ & $1.000$ & & $1.991$ & $2.448$ \\
    \bottomrule
  \end{tabular}
  \caption{Difference-ratio diagnostic \eqref{eq:rk} across refinement
  levels two to eight of $\Tf$ applied to a nonuniformly sampled wavy
  polygon, in the raw dyadic
  indexing and after uniform arc-length resampling of each level. The raw
  ratios lock at one, which is the parametric breakdown of
  Theorem~\ref{thm:main}, while the resampled ratios restore second-order
  behaviour and reach the DD6 signature $2.82$ before the resampling
  error floor of the chordal spline sets in at the finest levels.
  Generated by the accompanying notebook.}
  \label{tab:rk}
\end{table}

\begin{figure}[htbp]
  \centering
  \includegraphics[width=0.72\linewidth]{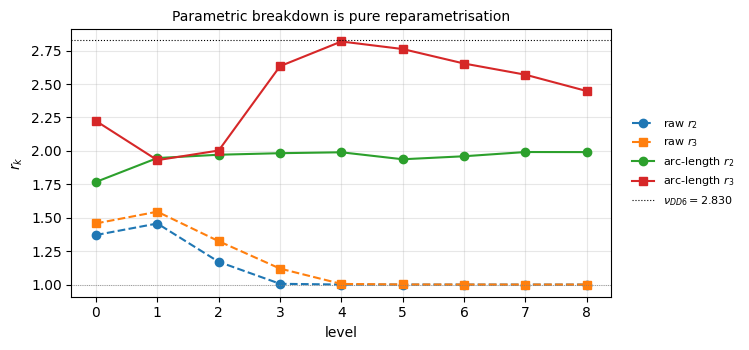}
  \caption{The difference ratios of Table~\ref{tab:rk} against the DD6
  exponent $2.830$. The raw dyadic ratios saturate at one because the
  frozen tangential kinks cap the parametric regularity, and arc-length
  reparametrisation restores the regularity of the normal companion, as
  Theorem~\ref{thm:main} proves. Generated by the accompanying
  notebook.}
  \label{fig:rk}
\end{figure}

The table compresses the whole paper into two column pairs. In the raw
dyadic indexing both ratios collapse to $1.000$ from level four onwards,
which is the permanent tangential kink structure of
Proposition~\ref{prop:neutral} seen through parametric differences, and
it sits below even the four-point signature, so any parametric reading of
the refinement would judge the scheme to have failed. After arc-length
resampling, $r_2$ returns to $1.99$ and $r_3$ climbs to $2.818$ at level
four, against the DD6 value of $2.830$, before drifting down as the
chordal arc-length and spline interpolation errors of the resampling
itself begin to dominate the third differences at the finest levels. The
geometry of the refinement therefore carries the full regularity of the
normal companion, and only the parametrisation obscures it, which is
precisely the content of Theorem~\ref{thm:main} and, in our reading, of
the conjecture of~\cite{ewald2015holder}.

\section{Discussion and Conclusions}
\label{sec:discussion}

In this paper, we have introduced a point-only curvature-driven
subdivision scheme in the plane and on the sphere, and we have carried
its analysis from construction through to convergence and geometric
regularity. The construction lets a curvature-profile model define the
inserted point through estimated endpoint curvatures, a prescribed
curvature profile, intrinsic Frenet reconstruction, geodesic shooting,
and half-arc-length evaluation. The scheme is well defined near geodesic
data through a local shooting theorem governed by the transverse Jacobi
field, and it reproduces geodesics and circles exactly, a property not
possessed by the stationary affine schemes we compare against. The raw
rule's loss of fairness under refinement is structural, since its planar
linearisation is exactly the four-point scheme, and the curvature
prefilter, unique among symmetric constant-preserving three-point
filters, converts that linearisation exactly to the six-point
Deslauriers--Dubuc scheme while preserving constant curvature. On the
reported planar and spherical benchmarks, the measured
curvature-variation energy of the filtered operator is below the
six-point comparator's at every tested level.

The analysis of the filtered operator then took a route we did not
anticipate. The scheme satisfies every structural axiom of the framework
of Ewald, Reif, and Sabin, its full linearisation at the straight chain
is the block scheme $\Smid \oplus \Sdd$ computed exactly in
Theorem~\ref{thm:block}, and yet no theorem of that framework applies to
it. The tangential companion is not smooth, the scheme is not locally
linear, and above all the straightening condition fails, with the
nonuniformly sampled circle of Theorem~\ref{thm:circle} showing that the
relative distortion can stay bounded away from zero forever while the
scheme converges to the exact circle. The failure is structural.
Half-arc-length insertion makes the tangential dynamics exactly neutral,
so a parametric Lyapunov function that penalises uneven spacing measures
a quantity the scheme has no reason to reduce. Replacing the relative
distortion by its normal part repairs the theory. The normal distortion
contracts in two steps with the exact constant $2697/4096$, the filtered
curvature differences contract with the constant $91/128$, an admissible
class of polygons propagates through all levels with edges halving to
third order and the safeguard never activating, the discrepancy from the
normal companion is a pure change of parameter that vanishes on
arc-uniform data, and the main theorem delivers convergence to a regular
interpolatory limit that is $G^2$ and, in arc length, $C^2$. For this
scheme, the result realises the enhancement conjectured at the end
of~\cite{ewald2015holder}, whereby regularity is governed by the normal
companion alone.

Several directions remain open, and we state them in the order in which
they seem approachable. The first is the H\"older endpoint. The
measurements reach the DD6 signature $2.82$ after reparametrisation, and
promoting the two-step sup-norm constants of Lemma~\ref{lem:linear} to
joint spectral quantities at the curvature level should yield arc-length
$C^{2,\alpha}$ regularity for every $\alpha$ below $0.830$, but the
bookkeeping of the nonlinear remainders at that precision is genuinely
harder than what Theorem~\ref{thm:invariance} required. The second is
the general form of the conjecture. Our proof uses two properties of the
scheme beyond the framework axioms, namely the exact restriction of
Theorem~\ref{thm:restriction}, which pins the tangential dynamics to
midpoint subdivision, and the fourth-order proximity of
Proposition~\ref{prop:proximity}, and a general theorem would have to
replace both by conditions stated on the companions alone. We expect the
normal-distortion calculus of Section~\ref{sec:normal} to survive that
generalisation unchanged, and the reparametrisation argument to be the
delicate part. The third direction is intrinsic. The spherical
experiments of Section~\ref{sec:spherenum} show the same stabilisation
mechanism at work, and a spherical linearisation theorem, presumably
with curvature-dependent corrections to the filter coefficients, would
put those experiments on the footing the planar theory now enjoys.
Transferring Theorems~\ref{thm:invariance} and~\ref{thm:main} to the
sphere and the hyperbolic plane appears within reach of the same
methods, with the space-form corrections entering at the order of the
remainders, in the spirit of the manifold proximity
analyses~\cite{wallner2005convergence,grohs2010general,wallner2011riemannian,wallner2014interpolatory}.
The hyperbolic construction itself requires the generalized-circle
curvature estimator for the circle, horocycle, and hypercircle family
together with an intrinsic implementation, and the Jacobi factor
argument of Theorem~\ref{thm:solvable} extends verbatim with $\sinh$ in
place of $\sin$, which suggests that a hyperbolic version exists.
Finally, a systematic robustness study on strongly irregular data,
beyond the single stress example of Section~\ref{sec:stress} and beyond
the admissible class, completes the list.

We close with two remarks. The first concerns cost. Each insertion is a
two-by-two Newton solve, with a median of one to three iterations in all
our experiments, wrapped around a segment integration, so the scheme is
slower than an affine rule by a constant factor and entirely practical.
The second concerns the shape of the theory. Geometric schemes are built
to serve the shape rather than the parametrisation, and the analysis
here suggests that their theory becomes simplest when the measuring
instruments respect the same distinction. The relative distortion is a
fine instrument that happens to measure both, and splitting it into its
tangential and normal parts is the novel idea we believe we present in
this paper.

\subsection*{Acknowledgements}
The author acknowledges the computational resources provided by the Centre
for Visual Computing and Intelligent Systems at the University of
Bradford.

\end{document}